\newcommand{\spayoff}{\Pi^{S}}
\newcommand{\hpayoff}{\Pi^{H}}
\newcommand{\neighb}[2]{N_{#2}(#1)}
\newcommand{\ldeg}{l}
\newtheorem{theorem}{Theorem}
\newtheorem{lemma}{Lemma}
\newtheorem{claim}{Claim}
\newtheorem{remark}{Remark}
\theoremstyle{remark}
\definecolor{Red}{rgb}{1,0,0}
\begin{document}

\title{A Game of Hide and Seek  in Networks\footnote{Bloch's research was supported by Agence Nationale de la Recherche though grant ANR-18-CE26-0020-01. Dziubi\'{n}ski's 
 work was supported by Polish National Science Centre through grants 2014/13/B/ST6/01807 and 2018/29/B/ST6/00174. }}

\author{Francis Bloch}
\affil{Universit\'{e} Paris 1 and Paris School of Economics \authorcr 48 Boulevard Jourdan \authorcr 75014 Paris, France \authorcr \texttt{francis.bloch@univ-paris1.fr}}
\author{Bhaskar Dutta}
\affil{University of Warwick and Ashoka University \authorcr CV4 7AL Coventry, UK \authorcr \texttt{b.dutta@warwick.ac.uk}}
\author{Marcin Dziubi\'{n}ski}
\affil{Institute of Informatics, University of Warsaw \authorcr Banacha 2, 02-097 \authorcr Warsaw, Poland  \authorcr \texttt{m.dziubinski@mimuw.edu.pl}}

\date{October 2019.}

\maketitle

\begin{abstract}
We propose and study a strategic model of hiding in a network, where the network designer chooses the links and his position in the network facing the seeker who inspects and disrupts the network. We characterize optimal networks for the hider, as well as equilibrium hiding and seeking strategies on these networks. We show that optimal networks are either equivalent to cycles or variants of a core-periphery networks where every node in the periphery is connected to a single node in the core.
\end{abstract}

\newpage

\section{Introduction}
\label{sec:intro}

This paper analyses  network design problem of (say) the leader of a covert organisation who has to construct a network connecting members of her organisation as well as choose her own (hiding)  position in the network. She faces an adversary who can \lq\lq attack" one node in the network so as to catch the leader and disrupt the network. 
Related issues   have  a very long standing. According to Greek mythology, Daedalus invented the Labyrinth in order to hide the monstrous Minotaur.\footnote{See Book 8 in Ovid's Metamorphosis.} Tunnels and underground chambers in Medieval castles and fortresses were built to hide treasures or prisoners. Underground fortifications were constructed in the XXth century to hide weapons and combatants. In modern days, criminals and terrorists build covert networks in order to hide leaders, money or secret instructions.

When an object or a person is being hidden, it must also be accessible for those who need it. The Minotaur cannot be sealed off in the Labyrinth, because every nine years, he receives a tribute of seven young boys and seven young girls from Athens. The medieval treasures and prisoners, the weapons and combatants of military forts also need to be recovered and freely moved. Leaders of criminal and terrorist organizations, secret plans and money must also be able to freely and efficiently move in the network. Hence, the design of networks to hide always involves a trade-off between security (the inviolability of the hiding place) and connectivity (the accessibility of hidden objects and persons). We show below that our network design problem also exhibits this trade-off, and go on to characterise  an optimal network design as a function of this trade-off between security and efficiency. 

We construct a zero-sum game with two players, a Hider and a Seeker. In the first stage of the game, the Hider designs a network which is observed both by the Hider and Seeker. In the second stage of the game, the Hider and Seeker simultaneously choose a node in the network (where the Hider hides and the Seeker seeks or attacks). The Seeker is able to observe any  node that is a neighbour of the node attacked by her. If the Hider hides in any of the nodes observed by the Seeker, the Seeker \lq\lq captures"  her and obtains a penalty from the Hider. If the Seeker does not find the Hider, she is still able to disrupt part of the network, by removing the node that she attacks. The Hider then receives a payoff which is an increasing function of the size of the component in which she hides. The payoff in the zero-sum two-person game thus consists of two elements: (i) a benefit (to the Seeker) of capturing the hidden object or person and (ii) a benefit (to the Hider) of using a network connecting a given number of nodes. 

We characterise optimal network architectures chosen by the Hider. The optimal network  can only take one of two forms:  either it contains a cycle (where all nodes are connected in a circle) or is a special core-periphery network where half of the nodes form an interconnected core, and the other half are leaves, each connected to a single node in the core.\footnote{If the number of nodes in the core-periphery network is odd, the architecture is slightly different, with three core nodes node connected to a periphery node.} In addition, a subset of the nodes will remain isolated. The number of  isolated nodes, and the choice between the circle and the core-periphery network for connected nodes depends on the parameters of the game, and in particular the shape of the function mapping the size of the network into the benefit of the Hider. Moreover, in the cases where non-singleton nodes form a core-periphery network, the characterisation of optimal network we obtain is complete.

To understand this characterisation of an optimal network, notice that any network which cannot be ``disrupted'' (in the sense that the network is not broken into different components if the Seeker fails to find the hidden object) must be two-connected, and hence contain a cycle. Now, adding links to the cycle can only increase the sizes of the neighborhoods and hence the probability that the hidden object is discovered.\footnote{Notice however that adding links may not change the probability of capture, if the hider only hides in a subset of the nodes in the cycle.} Therefore, if the objective of the Hider is primarily to avoid disruption of the network, forming a cycle will be an optimal choice for the hider. Notice however that in a cycle, every agent has two neighbors, so the probability of discovery of the hidden object must be at least equal to $\frac{3}{n}$. In order to reduce this probability of discovery, while keeping the network connected, one has to allow for the possibility that some nodes only have degree one. In the core-periphery network where half of the nodes are leaves connected to one node in the core, the probability of discovery is reduced to the minimal value for a connected graph. In equilibrium, the Hider chooses to hide in any of the peripheral nodes, whereas the Seeker seeks in any of the core nodes. This uniform hide and seek strategy results in a probability of discovery equal to $\frac{2}{n}$, lower than in the cycle, but induces a larger disruption, as the size of the remaining component after the Seeker fails to find the object is equal to $n-2$ rather than $n-1$. In the main characterization Theorem, we show that no other network performs better than the cycle or the core-periphery network. The cycle is preferred when the Hider puts more weight on avoiding disruption and the core-periphery network is preferred when the Hider puts more weight on avoiding discovery of the hidden object.

While no real network has the exact architecture of a cycle or core-periphery network, our results echo some observations on the trade-off between security and efficiency in physical networks of military fortifications and human networks of criminals and terrorists. 

Following the trench warfare of World War 1, the French army built the ``Maginot line'', a system of underground fortifications to protect the border between Germany and France between 1929 and 1935.\footnote{Ironically, the Maginot line proved useless during the German invasion of France in May 1940, as the German army simply by-passed the line of fortifications and entered France from Belgium and Luxembourg.} The design of the underground tunnels struck a balance between separating blocks (where combatants could hide) and allowing for easy communication of men and materials. Figure \ref{fig:maginot} provides an example of the underground tunnels in three of the largest fortifications of the Maginot line: the Hackenberg, Mont des Welches and Fermont ``gros ouvrages''. It shows that blocks are not directly connected to each other (echoing the fact that peripheral nodes are only connected to one node in the core and not to each other nor to a central node), while central areas (where men sleep and weapons and ammunition are stored) form a well-connected core in the middle of the ``gros ouvrage''.

\begin{center}
\begin{figure}[h]
  \includegraphics[width=6in,height=4in]{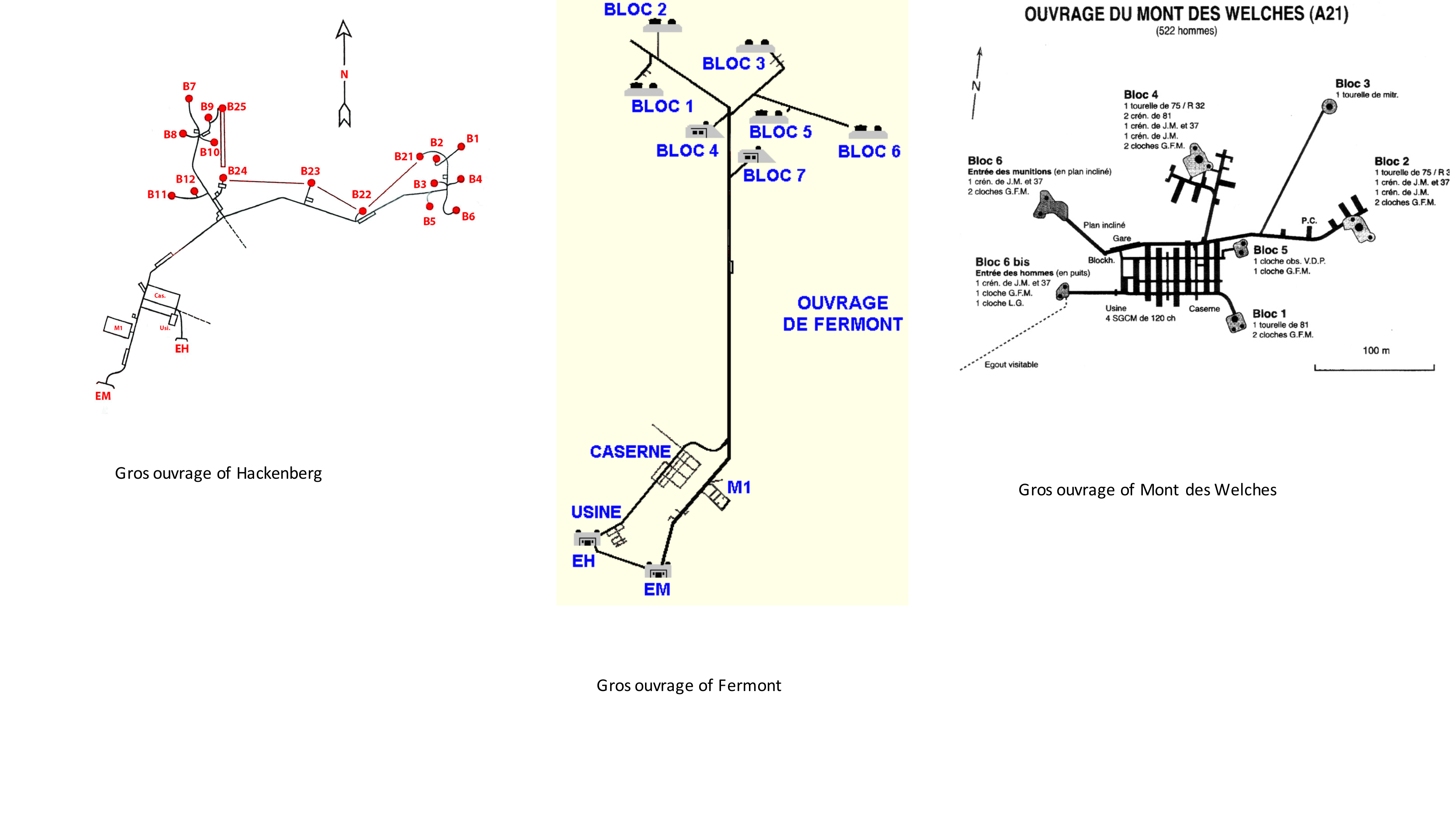}
  \caption{Three ``Gros Ouvrages'' of the Maginot Line}
  \label{fig:maginot}
\end{figure}
\end{center}
 
\cite{MorselliGiguerePetit2007} illustrate the trade-off between security and efficiency using data on terrorist networks (\cite{Krebs2002}'s map of the 9/11WTC terrorist cells) and criminal networks (a drug-trafficking network in Canada). They argue that terrorist networks are more likely to have longer average distances and fewer connections with no node assuming a central position, whereas criminal networks are more clustered and exhibit a core of nodes with high centrality. In addition they note that support nodes (which are not direct perpetrators of criminal or terrorist activities) help connect distant nodes in terrorist networks but not in criminal networks, where each support agent is attached to a single agent in the core. These two network architectures (long lines and core-periphery with clusters) can be related to the cycle and the core-periphery network we identify in our analysis. Figure \ref{fig:terrorcrime} illustrates these network architectures, by reproducing the map of the 9/11 WTC terrorist network (\cite{Krebs2002}) as well as the maps of two drug-trafficking mafia groups collected by \cite{Calderoni2012}.

\begin{center}
\begin{figure}[h]
  \includegraphics[width=6in,height=4in]{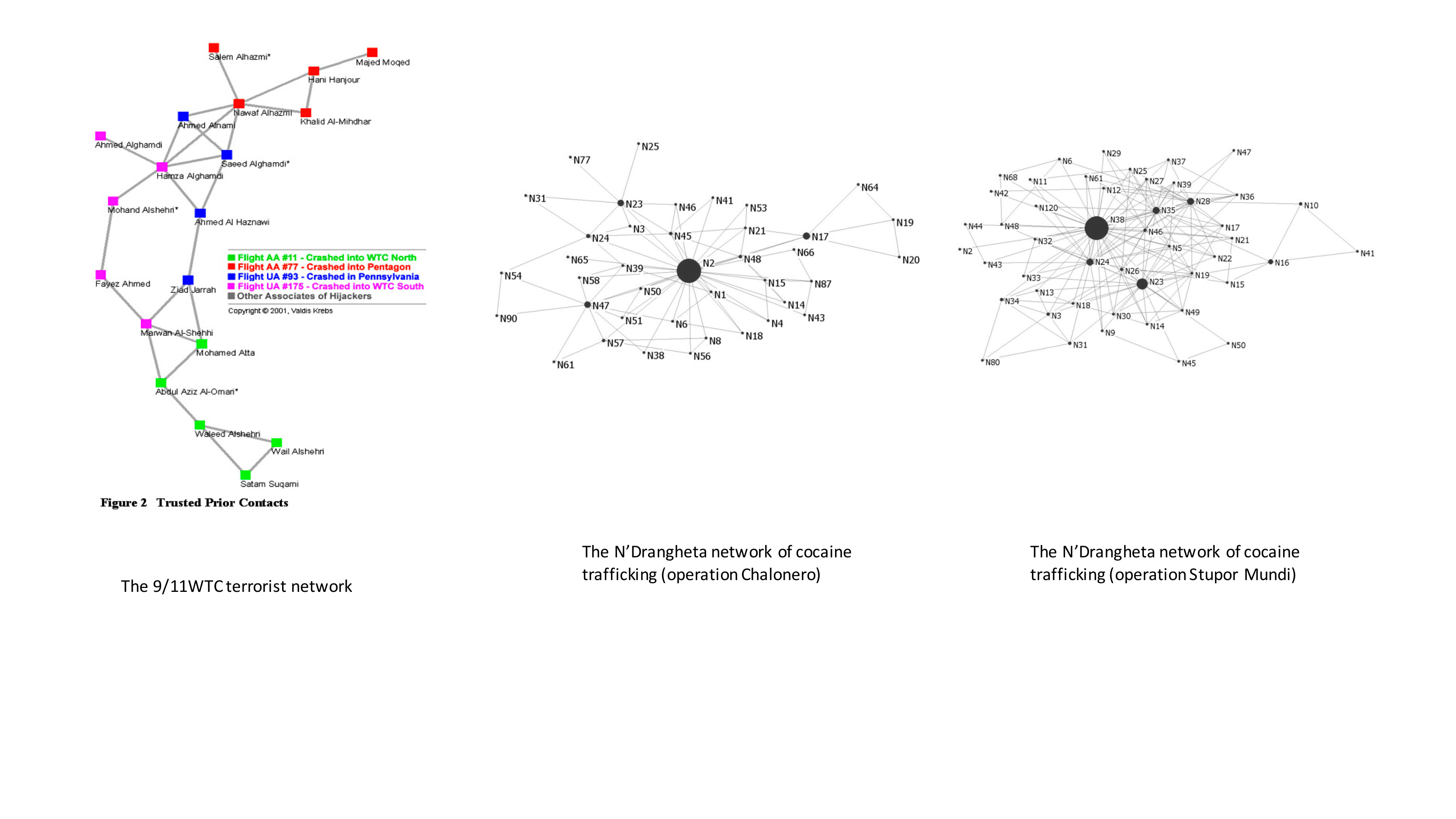}
  \caption{Three examples of terrorist and criminal networks}
  \label{fig:terrorcrime}
\end{figure}
\end{center}

\section{Related literature}

The  related literature spans a variety of disciplines, with the earlier literature focusing more on the aspect of hiding and seeking. Perhaps, the first paper was by von Neumann(1953) who discusses a zero-sum game where $H$  chooses a cell of an exogenously given matrix, 
while $S$ simultaneously  chooses a column or row in the matrix.  $S$ ``captures'' $H$ if the cell chosen by $H$ lies in the row or column chosen by $ S$. A related paper is 
Fischer (1993), who too analyses a similar zero-sum game, where $ H$ and $S$ simultaneously choose vertices of   an exogenously given graph. 
$ H$ is caught if $S$  chooses the same node as him or a node
connected to the node chosen by him.  Interestingly, the value of this ``hide and seek game'' on a fixed arbitrary network can been computed following \cite{Fisher1991}, using {\it  fractional graph theory}. \footnote{ (See also Theorem 1.4.1 in \cite{ScheinermanUllman1997})}.

Computer scientists have also  contributed to this literature.  with  \cite{WaniekMichalakRahwanWooldridge2017} and \cite{WaniekMichalakWooldridgeRahwan2018}  studying  a related, but different problem, of hiding in a network. They consider the leader of a terrorist or criminal organization, and ask the following question: How can a set of edges be added to the network in order to reduce the leader's measure of centrality  in order to avoid detection?   \cite{WaniekMichalakRahwanWooldridge2017} show that, both for degree and closeness centrality,  the problem is NP-complete. However, they also propose a procedure to build a new network from scratch around the leader (the ``captain network'') which achieves low levels of degree and closeness centrality but high values of diffusion centrality, where diffusion centrality is measured  using the independent cascade and linear threshold diffusion models. \cite{WaniekMichalakWooldridgeRahwan2018} extend the analysis to betweenness centrality and to the detection of communities (rather than individuals) in the network. Notice, however, that these models are not fully strategic since $S$ does not best respond to  $H$'s strategy.

Our paper is also related to a recent strand of  the economics literature analyzing network design and attack and defense on networks. \cite{BaccaraBarIsaac2008} study network design by an adversary (a criminal organization) taking the detection strategy of the defender as fixed. They highlight differences between two forms of detection, one which depends on the cooperation between criminals and the other which does not. In both situations, they characterize the optimal network architecture of the criminal network, which either consists of isolated two-player cells (with independent detection) or an asymmetric structure with one agent serving as an information hub (with cooperation-based detection). \cite{GoyalVigier2014} propose an alternative model of network design where the defender designs the network and chooses the distribution of defense across nodes before the attacker chooses to attack. Nodes are captured according to a Tullock contest function given the resources spent by the attacker and the defender. If a node is captured by the attacker, contagion occurs and the attacker starts attacking neighboring nodes while the defender loses his defense resources. The main message of \cite{GoyalVigier2014} is that the defendant optimally forms a star and concentrates all the defenses at the hub. \cite{DziubinskiGoyal2013} analyze a related model, where the defender designs the network and chooses defense resources before the attacker attacks. As opposed to \cite{GoyalVigier2014}, contagion does not occur and the network structure only matters through the payoffs of the two-person zero-sum game between the defender and the attacker.  The objective function of the defender is assumed to be increasing and convex in the size of components of the network, reflecting the fact that the defender wants to avoid disruption in the network.  The analysis shows that the designer will either form a star and protect the hub, or not protect any node and choose  to form a $(k+1)$-connected network when the attacker has $k$ units, so that the attacker will not be able to disrupt the network. In the same model,  \cite{DziubinskiGoyal2017} study equilibrium strategies of the defender and attacker for any arbitrary network structure while \cite{CerdeiroDziubinskiGoyal2017}  consider decentralized defense decisions by the different nodes in the network. 

The main difference between our paper and the recent literature on design, attack and defense stems from the fact that one particular node - the location chosen by the Hider- has a special significance in our model. Moreover, the capture of this node has a significant impact on payoffs. In other words, our model also incorporates the hiding-and-seeking  aspect  that is missing from the recent literature. 
 Another difference comes from the timing of the game. We suppose that the hider and seeker simultaneously choose the nodes in which to hide and that they inspect, resulting in equilibria in  mixed strategies as in Colonel Blotto games, whereas \cite{GoyalVigier2014} and \cite{DziubinskiGoyal2013} assume that the defender and attacker move sequentially, allowing for pure strategy equilibria. Note also that nodes cannot be defended in our analysis.

\section{The Model}
\label{sec:model}
There are two players, a \emph{Hider} ($H$)  and a \emph{Seeker} ($S$).  The hider $H$  constructs a network among $n$ nodes and chooses a location in the network. For example, the Hider may be the leader of a covert  terrorist or criminal organisation, which  has  $n-1$ other members.   The seeker is then interpreted as a law enforcing agency whose objective is to capture the leader of the organization or to disrupt the communication channels within the organisation. The interaction between $H$ and $S$ is modeled as a two-stage process, which is described below. 

In the first stage, $H$ chooses a network of interactions amongst the members of the organisation. 
Formally, $H$  chooses a graph $G = \langle V, E \rangle$ where $V$ is a set of $n$ vertices,  and $E$ is a  set of undirected edges $E \subseteq \binom{V}{2}$. A typical edge $ e\in E$ will be denoted $ij$, where $i, j \in V$. 

 Both players observe the chosen network at the beginning of  the second stage.   After observing the network $G$,  players $H$ and $S$ {\it simultaneously} choose {\it one} node each.   
The node chosen by the hider is his (hiding) position in the network. The node chosen by the seeker is the node she inspects (or attacks). Let $k$ be the node chosen by  $S$, and $N_G(k) = \{ j \in V| kj \in E\}$.  That is, $N_G(k)$ is the set of all neighbours of $k$ in $G$. All nodes in $\{k\} \cup N_G(k)$ can be observed by the seeker. If the chosen  position of $H$ is in $\{k\} \cup N_G(k)$, then $H$ is captured by $S$.  In addition, node $k$ is removed from the network, irrespective of whether $H$ is captured or not. 

The seeker uses his choice to capture the hider and to damage the network. Payoffs depend on whether or not the hider has been captured. If caught, the hider gets payoff $-\beta$, where $\beta \geq 0$. 

If the hider is not captured, the covert network remains operational, but is damaged by the attack of the seeker. Then the hider's payoff depends on the size of the component he is hiding in in the residual network. Formally,  his payoff is given by a  function $f : \mathbb{R}_{\geq 0} \rightarrow \mathbb{R}_{\geq 0}$ of the size of his component in the residual network. We assume $f$ to be strictly increasing with $f(0) = 0$.   An example of function $f$ in line with these assumptions is the identity function, $f(x) = x$ for all $x \in \mathbb{R}_{\geq 0}$. The game is assumed to be  a zero-sum game, so that  the  payoff to the seeker is equal to minus the payoff of the hider.

Formally, given a set of nodes $U \subseteq V$, let $\mathcal{G}(U)$ be the set of all undirected graphs over $U$ and let $\mathcal{G} = \bigcup_{U\subseteq V} \mathcal{G}(U)$ be the set of all undirected graphs that can be formed over $V$ or any of its subsets.
A strategy for the hider is a pair $(G,h) \in \mathcal{G}(V) \times V$, where $G$ is the graph and $h$ is the hiding place chosen by $H$ in $G$. As the seeker chooses his inspected node after observing the network, a strategy for the seeker is a function $s : \mathcal{G}(V) \rightarrow V$. 

Before defining the payoffs we introduce some auxiliary definitions on networks.
Given a set of nodes $U\subseteq V$ and a graph $G = \langle U, E \rangle$ over $U$, a maximal set of nodes $C \subseteq U$ such that any two nodes $i,j \in C$ are connected in $G$ is a \emph{component} of $G$.\footnote{
Two nodes $i,j \in U$ are connected in $G = \langle U, E\rangle$ if there exists a sequence of nodes $i_1,\ldots,i_l$ such that $i_0 = i$, $i_n = j$, and for all $k \in \{1,\ldots,l\}$, $i_{k-1}i_k \in E$.} The set of all components of $G$ is denoted by $\mathcal{C}(G)$. In addition, given $i \in U$, let $C_i(G)$ be the   component in $G$ containing $i$. Given a set of nodes $U\subseteq V$, a graph $G = \langle U, E \rangle$ over $U$, and a set of nodes $U'\subseteq U$, let $G[U'] = \langle U', E[U'] \rangle$ with $E[U'] = \{ij \in E : \{i,j\}\subseteq U'\}$ be the \emph{subgraph of $G$ induced by $U'$}. Given a node $k \in V$ let $G - k = G[U\setminus \{k\}]$ be the \emph{residual network} obtained from $G$ by removing $k$ and all its links from $G$.

Given the strategy profile $((G,h),s)$, the payoff to the hider is
\begin{equation}
\hpayoff(G,h,s) = \left\{\begin{array}{ll}
                       -\beta & \textrm{if $h\in \{s(G)\} \cup \neighb{s(G)}{G}$} \\
                       f(|C_i(G - s(G))|) & \textrm{otherwise.}
                       \end{array}
                \right.
\end{equation}
The payoff to the seeker is $\spayoff((G,h),s) = -\hpayoff((G,h),s)$.

The cycle network and the {\it core-periphery } networks will be important in our analysis. The cycle network is the unique network where every node has exactly two neighbors.

A \emph{core-periphery} network over a set $V = P \cup C$ of $n$ nodes is  defined
as follows. There are $q \geq \lceil n/2 \rceil$ \emph{core} nodes in set $C = \{\mathrm{c}_1,\ldots,\mathrm{c}_q\}$ and $m \leq \lfloor n/2 \rfloor$ \emph{periphery} nodes in set $P = \{\mathrm{p}_1,\ldots,\mathrm{p}_{m}\}$. Nodes of the core form a connected graph, while each periphery node, $\mathrm{p}_i$ with $1 \leq i \leq m$, is connected to core node $\mathrm{c}_{i}$.
Nodes of the core which are not connected to a periphery node are called \emph{orphaned} . Figure~\ref{fig:coreperipherymain} illustrates a core-periphery network with orphaned nodes.

\begin{figure}[!htb]
\centering
\includegraphics[scale=.75]{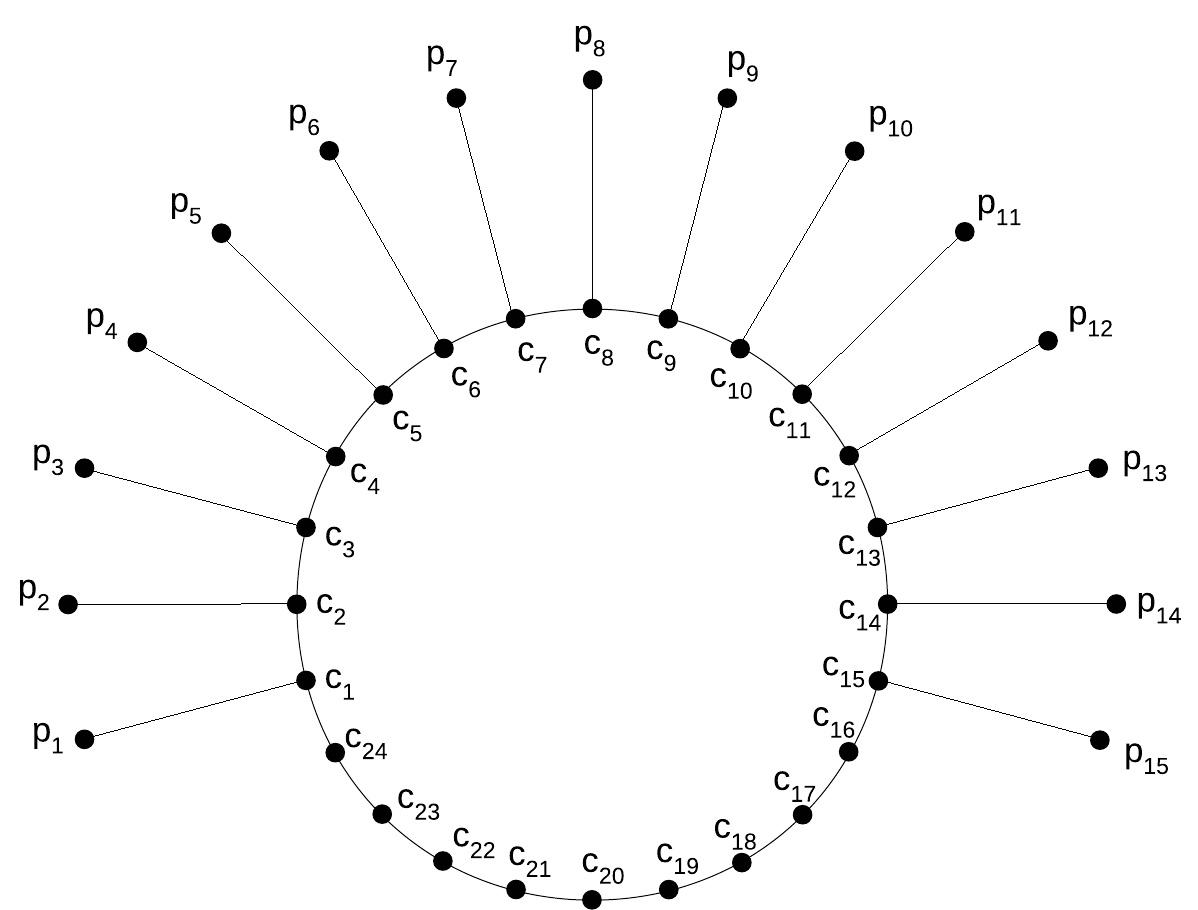}
\caption{A core-periphery network over $39$ nodes, with $15$ periphery nodes and $9$ orphaned core nodes.}
\label{fig:coreperipherymain}
\end{figure}

A particular class of core-periphery networks, which we call \emph{maximal}, plays a crucial role in our characterization. If $n$ is even, a core-periphery network is maximal if and only if it has $n/2$ periphery nodes and nodes of the core form a $2$-connected graph.\footnote{
A graph is $2$-connected if and only if it does not get disconnected after removing a single node.}
If $n$ is odd, a core-periphery network is maximal if and only if it has $(n-3)/2$ periphery nodes (and hence $3$ orphaned nodes), nodes of the core form a $2$-connected graph, and one orphaned node has exactly the two other orphaned nodes as its neighbours. Examples of maximal core-periphery networks are presented in Figure~\ref{fig:maxcp}.

\begin{figure}[!htb]
\centering
\includegraphics[scale=.75]{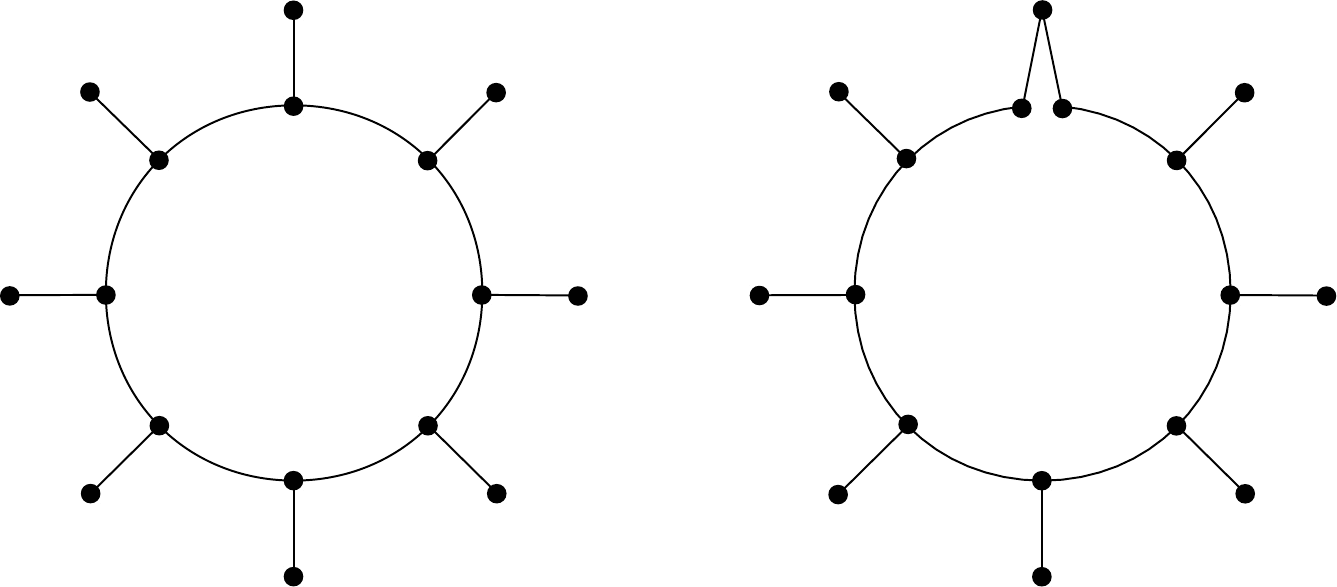}
\caption{Maximal core-periphery networks over $16$ nodes (left) and $17$ nodes (right).}
\label{fig:maxcp}
\end{figure}

\section{The Characterization Result}

Our objective in this section is to provide optimal networks for the hider as well as to characterise the hiding and the seeking strategies on these networks. 
 We show in our main result (Theorem~\ref{th:singlemain}) that these networks consist of a number of singleton nodes and a connected component which  contains either a {\it cycle} or has a particular {\it core periphery} topology.  

Whether a cycle or a core-periphery topology is better for the hider depends on the value of the following expression.
\begin{equation}
T(n,s) = (n - s - 3)f(n-s-1) - (n - s - 2)f(n-s-2).
\end{equation}
We will show that the cycle topology is better if $T(n,s) > \beta$, while  a core-periphery topology is better when $T(n,s) < \beta$. 

The following lemma asserts that in an  optimal network there cannot be a component  containing just two or three nodes. The lemma will be used in the proof of the main theorem.
\begin{lemma}\label{sing}
Suppose $G$ is an optimal network for $H$ whose set of non-singleton components is $\mathcal{X}$. Then, each component $C \in \mathcal{X}$ contains at least $4$ nodes.
\end{lemma}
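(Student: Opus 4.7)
My plan is to argue by contradiction: suppose $G$ is optimal for the hider and some non-singleton component $C\in\mathcal{X}$ has $|C|\in\{2,3\}$, and construct a network $G^\star$ on the same vertex set with $\val(G^\star)>\val(G)$. The key structural observation is that every component of size at most $3$ is capture-inefficient: in $K_2$, in $K_3$, and in $P_3$ attacked at its center, a single seeker attack captures the entire component, so any hider mass $q$ placed on $C$ is vulnerable to being harvested wholesale by one concentrated attack, while the non-capture reward $f(|C|)$ is only marginally larger than $f(1)$. Hence $C$ contributes a disproportionately attractive target for the seeker relative to the hiding benefit it provides.

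The modification I would use depends on the rest of $G$. If $G$ contains another non-singleton component $X$ (which, by focusing on a minimum-size counterexample, we may assume has size at least $4$), I would delete all edges inside $C$ and attach each vertex of $C$ as a leaf to a core vertex of $X$. A hider placed at a relocated leaf $u$ is then captured only when the seeker attacks $u$ itself or its unique new neighbour, so the capture rate against the relocated mass does not exceed what it was inside $C$; at the same time the non-capture payoff strictly increases from $f(|C|)$ to $f(|X|+|C|)$, so both terms in the hider's expected payoff move in its favour. If instead $G$ has no other non-singleton component, then $G$ consists of $C$ together with singletons; I would pool $C$ with two of the singletons (three if parity requires) to form either a cycle on $|C|+2$ nodes or a maximal core-periphery network on $|C|+2$ nodes, choosing whichever of the two topologies is favoured by the sign of $T(n,s)-\beta$. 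The closed-form equilibrium values for cycles and maximal core-periphery networks---established in the buildup to Theorem~\ref{th:singlemain}---would then be used to verify that the resulting value strictly exceeds $\val(G)$.

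The main obstacle is to ensure the strict improvement is unconditional. A naive modification---delete the edges of $C$ and leave its vertices as singletons---does not work in general: the non-capture payoff for the affected hider mass drops from $f(|C|)$ to $f(1)$, and for sufficiently convex $f$ this loss outweighs the gain from reduced capture efficiency. The plan above circumvents this trap by always \emph{absorbing} $C$ into a larger connected structure rather than dissolving it into isolated vertices; in both branches of the case analysis, the problem reduces to comparing two explicitly structured networks whose equilibrium values are computable via the cycle / core-periphery dichotomy that underlies the main theorem.
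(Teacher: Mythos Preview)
Your approach diverges sharply from the paper's: the paper simply dissolves $C$ into three singletons, keeps the hider's strategy unchanged off $C$, spreads the old mass on $C$ uniformly over the new singletons, and asserts that the hider's payoff strictly rises. You explicitly reject this move, and you are right to be suspicious: with $n=5$, $f(x)=x$, $\beta=0$ and $G$ equal to $K_3$ together with two isolated nodes, one computes $\val(G)=6/7$ while the empty graph on five nodes has value $4/5$; it is the $5$-cycle (value $8/5$) that dominates $G$, not the dissolved network. So the paper's argument, read literally, does not go through, and your diagnosis of the obstruction is on target.

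Your own proposal, however, is not yet a proof either. In branch~1 the phrase ``attach each vertex of $C$ as a leaf to a core vertex of $X$'' presumes structure on $X$ that an arbitrary component need not have, and the assertion that ``the capture rate against the relocated mass does not exceed what it was inside $C$'' compares capture probabilities under two \emph{different} seeker equilibria without any mechanism to control the new one: grafting leaves onto $X$ makes their anchor vertices strictly more attractive targets and may also depress the non-capture payoff for hider mass already sitting on $X$, so neither term moves monotonically in the direction you claim. In branch~2 you need at least two spare singletons (so the cases $n\le 4$, or $|C|=3$ with exactly one singleton, are not covered), and you appeal to value formulas ``established in the buildup to Theorem~\ref{th:singlemain}''; the paper organises those derivations around the restriction $s\in\{0,\dots,n-4,n\}$ that Remark~\ref{singr} extracts from the present lemma, so invoking them here is circular unless you first redo the cycle and core-periphery value computations independently. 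What both branches lack is a self-contained lower bound on the hider's guarantee in $G^\star$ that can be compared directly to $\val(G)$.
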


\begin{proof} 
Suppose the lemma is not true and some $C \in \mathcal{X}$ has exactly three nodes, $C = \{n_1, n_2, n_3\}$.  Following standard arguments, $C$ must have a non-empty intersection with the support of $H$'s optimal hiding strategy as well as $S$'s optimal seeking strategy, given $G$. (If not, the hider or the seeker would have profitable deviations). Moreover,  conditional on hiding in $C$, $H$ is caught with probability $\rho$, where $\rho$ is the total probability with which $S$ seeks in $C$. This is true because $S$ can always search one node in $C$ that has two neighbours, and hence observe all nodes in the component.

Let $G'$ be another network which coincides with $G$ everywhere except that $C$ is broken up into singleton components $\{n_1\}$, $\{n_2\}$, $\{n_3\}$.   Moreover, suppose $H$'s hiding strategy coincides  with that in $G$  everywhere on $V\setminus C$, while $H$ distributes the earlier probability  weight on $C$ uniformly on the three nodes $n_1$, $n_2$, $n_3$.  It is straightforward to check that  $H$'s expected payoff in $G'$ is strictly higher than his expected payoff in inspecting a node in the component $C$, which must be equal to his expected payoff of using a mixed strategy in $G$, contradicting optimality of $G$.

A similar argument rules out an  optimal network containing  a component with only two nodes.
\end{proof}

\begin{remark}
\label{singr}
An implication of this lemma  is that the optimal network will either be completely disconnected with $n$ singletons or will contain at most $n-4$ singletons.  This implication will be used throughout the proof of the theorem.\end{remark}

At this stage, we describe the main result of the paper somewhat  informally. A formal statement appears towards the end of the section. 

We will construct an equilibrium that will have the following features. 
\begin{itemize}
\item The optimal network $G$  will have a certain number of singleton nodes $s$ (that will be determined) where $s \le n-4$ or $s =n$.
\item If $T(n, s) \ge \beta$ and $s\ne n$, then $G$ has a cycle component over $n-s$  nodes. 
\item If $T(n,s) < \beta$, $n - s \geq 4$, then $G$ will have a maximal core periphery component over $n-s$ nodes.
\item The hider mixes between hiding in the singleton nodes and in the connected component with probabilities that will be determined.  When hiding in the singleton nodes, he mixes uniformly across all these nodes. When hiding in the connected component, he mixes uniformly across all the nodes when it is a cycle, mixes uniformly across the periphery nodes when it is a maximal core-periphery network over even number of nodes, and mixes between hiding in periphery nodes, mixing uniformly across them, and the middle orphaned node, otherwise.
\item The seeker mixes between seeking in the singleton nodes and in the connected component. When seeking in the singleton nodes, he mixes uniformly across all these nodes. When seeking in the connected component, he mixes uniformly across all the nodes when it is a cycle, mixes uniformly across the core nodes when it is a maximal core-periphery network over even number of  nodes, and mixes between seeking in the neighbours of periphery nodes, mixing uniformly across them, and the middle orphaned node, otherwise.
\end{itemize}

To get some intuition behind the result, notice that the hider faces a tradeoff between the cost of being caught and the value he gets in the residual network.  Adding links in the network increases connectivity and hence  secures a larger value after the the seeker's action {\it provided} he is not caught.  However, a larger number of links also leads to higher exposure and a greater probability of being caught, as it increases the size of the neighborhoods of the nodes in which the hider can hide.
Fixing the number of singleton nodes, $s$, the choice between a cycle and a core-periphery network is influenced by the change in $f$, as measured by the quantity $T(n,s)$. The probability of being caught in a cycle of size $n-s$ is $3/(n-s)$, as each node has exactly two neighbours, while  only one node is lost from the cycle component if not caught. The probability of being caught in a maximal core-periphery network (if $n-s$ is even), on the other hand, is $2/(n-s)$ since the hider hides mixing uniformly across the periphery nodes; in the event of not being caught, two nodes are lost from the core periphery  component  since the seeker seeks mixing uniformly across the core nodes.
If the change in $f$ between $n-s-2$ and $n-s-1$ is sufficiently large, so that $T(n,s) > \beta$ then the marginal loss from an additional node being removed from a component is high, as compared to the penalty for being caught, and, therefore, a cycle is preferred over the core-periphery network. If the change in $f$ is not sufficiently large, on the other hand, the marginal loss from an additional node being removed from a component is not sufficiently high and the hider prefers to opt for the safer, core-periphery, network. 

The proof of the theorem is long and we provide a brief description of the general technique before giving the details.

We start by constructing a feasible strategy of the seeker that, for each network over the set of nodes $V$, provides a (mixed) seeking strategy on that network. This strategy determines the payoffs the seeker can secure for each possible network over $V$. Since the game is zero-sum, minus these payoffs provide  an upper bound on the payoff the hider can get for each network. Next, for each $s\in \{0,\ldots,n-4, n\}$, we construct a network that is optimal for the hider across all possible networks with exactly $s$ singleton nodes. In the case of $T(n,s)\ge \beta$, as well as in the case of $s$ being even, these networks yield payoffs to the hider that meet the upper bound determined in the first part of the proof. In the case of $T(n,s) < \beta$ and odd $s$, the upper bound from the first part of the proof is not exact. Therefore in this step we establish both the optimal networks for the hider and the exact upper bound on the hider's payoff.

We will use a series of lemmas to prove the theorem.  We first introduce a partition of the nodes into  different sets that will play a crucial role in the construction of a strategy for the seeker.

Given a (possibly disconnected) network $G$ over the set of nodes $V$, node $i \in V$ is a \emph{singleton node} if $|\neighb{i}{G}| = 0$.
The set of singleton nodes of $G$ is denoted by $S(G)$.
Node $i\in V$ is a \emph{leaf} if $|\neighb{i}{G}| = 1$. The set of leaves of $G$ is denoted by $L(G)$. Given node $i \in V$, let $\ldeg_i(G) = |\neighb{i}{G}\cap L(G)|$ denote the number of leaf-neighbours of $i$. 

Let 
\begin{equation*}
M(G) = \{i \in V : \ldeg_i(G) = 1 \}
\end{equation*}
be the set of nodes which  are connected to exactly one leaf in $G$ and let 
\begin{equation*}
S\!L(G) = \{i \in L(G) : \neighb{i}{G} \cap M(G) \neq \varnothing \}
 \end{equation*}
be the set of leaves connected to an element of $M(G)$. Such leaves are called \emph{singleton leaves}.
Let $R(G) = V \setminus (S(G) \cup S\!L(G) \cup M(G))$ be the set of nodes in $G$ which are neither a singleton, nor a singleton leaf, nor a neighbour of a singleton leaf.

We now construct a strategy for the seeker which guarantees a given payoff for any network $G$.
Take any network $G$ over $V$ and let $s = |S(G)|$ and $m = |M(G)|$.  Moreover, let $G\!R = G[R(G)]$ be the subnetwork of $G$ generated by the set of nodes $R(G)$. In particular, when $R(G) = \varnothing$, $G\!R$ is the empty network with empty sets of nodes and links. Let $D(G\!R)$ be the set of nodes in $R(G)$ that belong to two-element subsets of $R(G)$. 

Consider a mixed strategy  of player $S$, $\bm{\sigma} = (\sigma_1,\ldots,\sigma_n)$, with the following probabilities:
\begin{equation}
\label{eq:sigma}
\bm{\sigma} = \lambda_{\mathrm{S}} \bm{\sigma}^{\mathrm{S}} + \left(1 - \lambda_{\mathrm{S}}\right)\left( \lambda_{\mathrm{R}} \bm{\sigma}^{\mathrm{R}} + \left(1-\lambda_{\mathrm{R}}\right) \bm{\sigma}^{\mathrm{M}}\right)
\end{equation}
where  $\lambda_{\mathrm{R}},\lambda_{\mathrm{S}} \in [0,1]$, and
\begin{align*}
\sigma_i^{\mathrm{S}} & = \begin{cases}
                           \frac{1}{s}, & \textrm{if $i \in S(G)$,}\\
                           0, & \textrm{otherwise,}
                           \end{cases}\\
\sigma_i^{\mathrm{M}} & = \begin{cases}
                           \frac{1}{m}, & \textrm{if $i \in M(G)$,}\\
                           0, & \textrm{otherwise,}
                           \end{cases}\\
\sigma_i^{\mathrm{R}} & = \begin{cases}
                           \frac{l_i(G\!R)+1}{n-s-2m}, & \textrm{if $i \in R(G)\setminus (L(G\!R)$,}\\
                           \frac{1}{{n-s-2m}}, & \textrm{if $ i \in D(G\!R)$,}\\
                           0, & \textrm{otherwise,}
                           \end{cases}
\end{align*}

We first show that these probabilities are well-defined.
\begin{lemma}\label{feas}
  $\bm{\sigma}$ is a feasible strategy for the seeker $S$.
\end{lemma}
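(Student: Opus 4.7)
The plan is to verify two things: (i) every coordinate $\sigma_i \geq 0$, and (ii) $\sum_i \sigma_i = 1$. Non-negativity is immediate from the formulas (every entry is either $0$, $1/s$, $1/m$, $(l_i(G\!R)+1)/(n-s-2m)$, or $1/(n-s-2m)$, and the outer weights $\lambda_S, (1-\lambda_S)\lambda_R, (1-\lambda_S)(1-\lambda_R)$ are nonnegative and sum to $1$ since $\lambda_R,\lambda_S\in[0,1]$). So the main content is to check that each of $\bm{\sigma}^{\mathrm{S}}, \bm{\sigma}^{\mathrm{M}}, \bm{\sigma}^{\mathrm{R}}$ sums to $1$; the first two are obvious (uniform over sets of the indicated cardinalities), so the heart of the proof is the sum for $\bm{\sigma}^{\mathrm{R}}$. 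Degenerate cases $s=0$, $m=0$, or $R(G)=\varnothing$ are handled by the convention that the corresponding weight ($\lambda_S$, $1-\lambda_R$, $\lambda_R$ respectively) is taken to be $0$, so the undefined component drops out.

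For $\bm{\sigma}^{\mathrm{R}}$, I would first record the basic size identity $|R(G)| = n - s - 2m$. This follows because $V$ partitions into $S(G)$, $S\!L(G)$, $M(G)$, and $R(G)$; by definition each node in $M(G)$ has a unique leaf-neighbour that lies in $S\!L(G)$, establishing a bijection between $S\!L(G)$ and $M(G)$, so $|S\!L(G)|=|M(G)|=m$ and $|R(G)|=n-s-2m$.

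The substantive step is a double-counting identity. I want to prove
\[
\sum_{i \in R(G)\setminus L(G\!R)} \bigl(l_i(G\!R)+1\bigr) \;+\; \sum_{i \in D(G\!R)} 1 \;=\; n-s-2m,
\]
which gives $\sum_i \sigma_i^{\mathrm{R}} = 1$. Split the first sum: the ``$+1$'' contributions total $|R(G)\setminus L(G\!R)| = |R(G)| - |L(G\!R)|$. For the $l_i(G\!R)$ contributions, I count pairs $(\ell, i)$ with $\ell\in L(G\!R)$, $i\in R(G)\setminus L(G\!R)$, and $\ell i\in E(G\!R)$. Each leaf $\ell$ of $G\!R$ has a unique neighbour in $G\!R$; that neighbour is either another leaf---in which case $\ell$ lies in a two-element component of $G\!R$, so $\ell\in D(G\!R)$---or a non-leaf of $G\!R$. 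Hence the number of such pairs equals $|L(G\!R)\setminus D(G\!R)| = |L(G\!R)| - |D(G\!R)|$. Adding the three contributions,
\[
\bigl(|R(G)|-|L(G\!R)|\bigr) + \bigl(|L(G\!R)|-|D(G\!R)|\bigr) + |D(G\!R)| = |R(G)| = n-s-2m,
\]
as required.

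The main obstacle (more of a bookkeeping hurdle than a conceptual one) is correctly handling the classification of leaves in $G\!R$ via the observation that $D(G\!R)\subseteq L(G\!R)$ and that every leaf in $D(G\!R)$ has its unique $G\!R$-neighbour also in $L(G\!R)$, so it contributes to none of the $l_i(G\!R)$ counts. Once this is clear, the identity collapses telescopically and feasibility follows.
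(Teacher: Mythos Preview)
Your proof is correct and follows the same approach as the paper's. The only difference is presentational: where you carry out the explicit double-counting identity for $\sum_i\sigma_i^{\mathrm{R}}$, the paper compresses this into the single observation that $\bm{\sigma}^{\mathrm{R}}$ is obtained from the uniform distribution on $R(G)$ by shifting the mass from each leaf in $G\!R\setminus D(G\!R)$ to its unique neighbour---which is exactly your telescoping computation in conceptual form.
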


\begin{proof}
Clearly, $\bm{\sigma}^{\mathrm{S}}$ is a valid probability distribution as long as $S(G)\neq \varnothing$, that is $s > 0$.
Similarly, $\bm{\sigma}^{\mathrm{M}}$ is a valid probability distribution as long as $M(G) \neq \varnothing$, that is $m \geq 1$. It is also easy to see that $\bm{\sigma}^{\mathrm{R}}$ is a valid probability distribution as long as $R(G)\neq \varnothing$. To see this,  notice that $R(G)$ contains exactly $n - s - 2m$ nodes and $\bm{\sigma}^{\mathrm{R}}$ can be obtained from a uniform distribution on $R(G)$ by moving the probability mass assigned to leaves in $G\!R \setminus D(G\!R)$ to    their neighbours. Lastly, notice that if $S(G) \neq \varnothing$, then either all the non-singleton nodes in $G$ have degree $1$, in which case $M(G)\neq \varnothing$, or there exists a node in $G$ of degree $2$ or more, in which case either $M(G)\neq \varnothing$ or $R(G)\neq \varnothing$. Hence if $S(G) \neq \varnothing$, then either $\bm{\sigma}^{\mathrm{M}}$ or $\bm{\sigma}^{\mathrm{R}}$ is a valid probability distribution.
 By these observations, $\bm{\sigma}$ is a valid probability distribution as long as $\lambda_{\mathrm{S}} = 1$,  if $s =n$, $\lambda_{\mathrm{S}} = 0$, if $s = 0$, $\lambda_{\mathrm{R}} = 0$, if $R(G)= \varnothing$, and $\lambda_{\mathrm{R}} = 1$, if $m = 0$. 
So, the lemma is true.
\end{proof}

The idea behind the construction of strategy $\bm{\sigma}$ is as follows. With probability $\lambda_{\mathrm{S}}$, player $S$ seeks in the set of singleton nodes, $S(G)$, and with probability $(1-\lambda_{\mathrm{S}})$ he seeks outside this set. Conditional on seeking outside $S(G)$, with probability $\lambda_{\mathrm{R}}$ player $S$ seeks in the set of nodes $R(G)$ and with probability $(1 - \lambda_{\mathrm{R}})$ he seeks in the set $S\!L(G) \cup M(G)$.
When seeking in $S(G)$, $S$ mixes uniformly across all the singleton nodes. When seeking in $SL(G)\cup M(G)$, $S$ mixes uniformly across all the nodes neighbouring a singleton leaf, that is all the nodes in $M(G)$. Lastly, when seeking in the set of nodes $R(G)$, $S$ mixes using strategy $\bm{\sigma}^{\mathrm{R}}$.  In the next two lemmas, we compute lower bounds on the probability of capture of the hider in different regions of the network.

\begin{lemma}\label{capt1}
The probability of capture of player $H$ 
is at least $(1-\lambda_{\mathrm{S}})\lambda_{\mathrm{R}} 3 / (n-s-2m)$, if $H$ hides in $R(G) \setminus (S(G\!R) \cup SL(G\!R) 
\cup D(G\!R))$.
\end{lemma}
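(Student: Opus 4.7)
The plan is to exploit the convex-combination structure of $\bm{\sigma}$: since the overall capture probability is at least the contribution of any one component of the mixture, and the $\sigma^{\mathrm{R}}$ branch is used with probability $(1-\lambda_{\mathrm{S}})\lambda_{\mathrm{R}}$, it suffices to prove that, conditional on the seeker using $\bm{\sigma}^{\mathrm{R}}$, the probability of capture of $h\in R(G)\setminus\bigl(S(G\!R)\cup S\!L(G\!R)\cup D(G\!R)\bigr)$ is at least $3/(n-s-2m)$. Because $\bm{\sigma}^{\mathrm{R}}$ puts mass only on nodes of $R(G)$, the relevant neighbors of $h$ to sum over are exactly the neighbors of $h$ in the induced subgraph $G\!R$.

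I would split the argument into two cases based on the local structure of $h$ in $G\!R$. In Case A, $h$ is not a leaf of $G\!R$, so $h$ has $\deg_{G\!R}(h)\geq 2$ neighbors in $R(G)$ and gets its own mass $(\ldeg_h(G\!R)+1)/(n-s-2m)$. Every non-leaf neighbor of $h$ in $G\!R$ contributes at least $1/(n-s-2m)$; every leaf neighbor of $h$ in $G\!R$ has $h$ (a non-leaf) as its unique $G\!R$-neighbor and therefore does not lie in $D(G\!R)$, so it contributes $0$. Cancelling the $\ldeg_h(G\!R)$ zero-contributing leaf neighbors against the $+1$ in the self-mass, the total is at least $(\deg_{G\!R}(h)+1)/(n-s-2m)\geq 3/(n-s-2m)$.

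In Case B, $h$ is a leaf of $G\!R$, so $\sigma^{\mathrm{R}}_h=0$. Let $v$ be its unique $G\!R$-neighbor. Since $h\notin D(G\!R)$, the component of $h$ in $G\!R$ has at least three nodes, forcing $v$ to be a non-leaf of $G\!R$; hence $\sigma^{\mathrm{R}}_v=(\ldeg_v(G\!R)+1)/(n-s-2m)$. Since $h\notin S\!L(G\!R)$, we have $v\notin M(G\!R)$, i.e.\ $\ldeg_v(G\!R)\neq 1$; combined with the fact that $h$ itself is a leaf-neighbour of $v$ in $G\!R$ (so $\ldeg_v(G\!R)\geq 1$), this gives $\ldeg_v(G\!R)\geq 2$, and therefore $\sigma^{\mathrm{R}}_v\geq 3/(n-s-2m)$.

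In both cases the $\sigma^{\mathrm{R}}$-conditional capture probability is at least $3/(n-s-2m)$; multiplying by the unconditional weight $(1-\lambda_{\mathrm{S}})\lambda_{\mathrm{R}}$ on this branch, and observing that the $\sigma^{\mathrm{S}}$ and $\sigma^{\mathrm{M}}$ branches only add to the capture probability, yields the stated lower bound. The main subtlety to watch is bookkeeping: one must be careful that $M(\cdot)$, $S\!L(\cdot)$, $D(\cdot)$ are taken with respect to the induced subgraph $G\!R$ (not $G$), and that neighbors of $h$ lying outside $R(G)$ simply drop out because $\bm{\sigma}^{\mathrm{R}}$ assigns them probability $0$.
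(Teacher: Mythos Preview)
Your argument is correct and follows essentially the same two-case split as the paper's proof (non-leaf of $G\!R$ versus leaf of $G\!R$), only with the bookkeeping in Case~A spelled out more carefully than the paper does. The paper's Case~1 simply asserts that a non-leaf $i$ together with its (at least two) $G\!R$-neighbours carries $\sigma^{\mathrm{R}}$-mass at least $3/(n-s-2m)$, whereas you make explicit the cancellation between the $\ldeg_h(G\!R)$ in the self-mass and the zero contribution of the leaf neighbours; Case~B is identical to the paper's Case~2.
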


\begin{proof}
 Take  any node $i\in R(G) \setminus (S(G\!R) \cup SL(G\!R) 
\cup D(G\!R))$.
Suppose, first,  that $i$ is not a leaf in $G\!R$, i.e. $i \in R(G)\setminus L(G\!R)$. Then $i$ has at least two neighbours in $R(G)$ and the probability that seeker seeks at $i$ or at one of $i$'s neighbours is at least $(1-\lambda_{\mathrm{S}})\lambda_{\mathrm{R}} 3 / (n-s-2m)$. 
Suppose, next,  that $i\in L(G\!R)\setminus (S\!L(G\!R)\cup D(G\!R))$.
Then $i$ has a neighbour $j \in R(G)$ that has at least one more leaf neighbour in $G\!R$. Since $\sigma_j  =(1-\lambda_{\mathrm{S}})\lambda_{\mathrm{R}} 3 / (n-s-2m)$, the lemma is true. 
\end{proof}

\begin{lemma}
\label{capt2}
The probability of capture of player $H$ 
is at least $(1-\lambda_{\mathrm{S}})\lambda_{\mathrm{R}} 3 / (n-s-2m)$, if $H$ hides in $ S(G\!R) \cup SL(G\!R) 
\cup D(G\!R)$.
\end{lemma}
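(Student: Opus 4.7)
The plan is to partition the region $S(G\!R) \cup SL(G\!R) \cup D(G\!R)$ into three disjoint sub-cases according to where the hider's node $i$ sits in $G\!R$ (noting that $D(G\!R) \subseteq SL(G\!R)$, so the partition is $S(G\!R)$, $SL(G\!R) \setminus D(G\!R)$, and $D(G\!R)$), and in each case to decompose the capture probability into the contributions of $\bm{\sigma}^{\mathrm{R}}$ and $\bm{\sigma}^{\mathrm{M}}$. The driving structural observation is that, although a node that is ``thin'' inside $G\!R$ (isolated, or one end of a $D$-edge) supplies very little $\sigma^{\mathrm{R}}$-mass, the ambient graph $G$ must compensate by attaching $i$ (or its partner) to $M(G)$; those hidden neighbours carry uniform mass $1/m$ under $\bm{\sigma}^{\mathrm{M}}$ and combine with the $\sigma^{\mathrm{R}}$-mass to reach the target bound.

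For $i \in S(G\!R)$, I first show that every $G$-neighbour of $i$ must lie in $M(G)$: such a neighbour cannot lie in $S(G)$ (no edges), in $SL(G)$ (whose unique edge goes to $M(G)$, not to $R(G) \ni i$), nor in $R(G) \setminus \{i\}$ (as $i$ is isolated in $G\!R$). Moreover $i$ must have at least two such neighbours, since zero would put $i \in S(G)$ and exactly one would put $i \in SL(G)$, each contradicting $i \in R(G)$. This yields a $\bm{\sigma}^{\mathrm{M}}$-contribution of at least $(1 - \lambda_{\mathrm{S}})(1 - \lambda_{\mathrm{R}}) \cdot 2/m$ to the capture probability.

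For $i \in SL(G\!R) \setminus D(G\!R)$, let $j$ be the unique $G\!R$-neighbour of $i$; then $j \in M(G\!R)$ and, since $i \notin D(G\!R)$, $j$ is not a leaf in $G\!R$, so $\sigma_j^{\mathrm{R}} = (\ldeg_j(G\!R)+1)/(n-s-2m) = 2/(n-s-2m)$. A short counting argument then shows that $i$ must carry at least one $M(G)$-neighbour in $G$: if not, $i$ would be a $G$-leaf attached only to $j$, and since $j$'s non-$R(G)$ neighbours cannot be $G$-leaves (they would have to lie in $M(G)$, but any $M(G)$-node adjacent to a non-leaf $j$ is itself a non-leaf), $i$ would be $j$'s sole leaf-neighbour in $G$, forcing $j \in M(G)$ and contradicting $j \in R(G)$. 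Summing the $\bm{\sigma}^{\mathrm{R}}$-mass at $j$ with the $\bm{\sigma}^{\mathrm{M}}$-mass at $i$'s $M(G)$-neighbour gives the desired bound. The case $i \in D(G\!R)$ is handled symmetrically: both $i$ and its partner $j$ contribute $\sigma^{\mathrm{R}}$-mass $1/(n-s-2m)$, and neither can lack an $M(G)$-neighbour in $G$, for otherwise $\{i,j\}$ would be an isolated edge of $G$, putting both endpoints in $SL(G) \cap M(G)$ and contradicting their membership in $R(G)$.

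The main obstacle is the combinatorial bookkeeping that, in each case, rules out the ``fully isolated'' configuration in $G$ which would leave $\bm{\sigma}^{\mathrm{M}}$ with nothing to offer: the coverage from $\bm{\sigma}^{\mathrm{R}}$ alone is at most $2/(n-s-2m)$ throughout this region, so the argument genuinely relies on these forced $M(G)$-attachments. Once the structural forcings are in place, it remains to aggregate the $\bm{\sigma}^{\mathrm{R}}$ and $\bm{\sigma}^{\mathrm{M}}$ contributions and compare the resulting expression against $(1 - \lambda_{\mathrm{S}}) \lambda_{\mathrm{R}} \cdot 3/(n-s-2m)$, using the feasibility constraints on $\lambda_{\mathrm{R}}$ inherited from the construction of $\bm{\sigma}$ in Lemma~\ref{feas}.
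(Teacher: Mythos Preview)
Your structural case analysis is essentially sound: in each of the three sub-cases you correctly force at least one $M(G)$-neighbour of $i$ (with a small correction in the $D(G\!R)$ case: if only $i$ lacks an $M(G)$-neighbour then $\{i,j\}$ need not be an isolated edge of $G$; rather $i$ becomes a $G$-leaf attached to $j$, and since $j$'s other leaf-neighbours would also land in $R(G)$ and hence be $G\!R$-leaves of $j$, one gets $j\in M(G)$, contradicting $j\in R(G)$). The genuine gap is in your final step. You propose to compare the aggregated $\bm{\sigma}^{\mathrm{R}}$ and $\bm{\sigma}^{\mathrm{M}}$ contributions against the target ``using the feasibility constraints on $\lambda_{\mathrm{R}}$ inherited from Lemma~\ref{feas}''. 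But Lemma~\ref{feas} only says $\lambda_{\mathrm{R}}\in[0,1]$ (with boundary conventions); it does not pin $\lambda_{\mathrm{R}}$ down. And that is not enough: on the region in question your $\bm{\sigma}^{\mathrm{R}}$-coverage never exceeds $(1-\lambda_{\mathrm{S}})\lambda_{\mathrm{R}}\cdot 2/(n-s-2m)$, so you must rely on the $\bm{\sigma}^{\mathrm{M}}$-term $(1-\lambda_{\mathrm{S}})(1-\lambda_{\mathrm{R}})/m$, which vanishes as $\lambda_{\mathrm{R}}\to 1$ while the target $(1-\lambda_{\mathrm{S}})\lambda_{\mathrm{R}}\cdot 3/(n-s-2m)$ does not. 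The comparison therefore fails for generic $\lambda_{\mathrm{R}}$; what is actually required is the \emph{specific} choice $\lambda_{\mathrm{R}}=\rho$ made in~\eqref{lambr} right after the lemma, which (using $f(n-s-1)>f(n-s-2)$) yields the key inequality
\[
\frac{1-\lambda_{\mathrm{R}}}{m}\;>\;\frac{3\lambda_{\mathrm{R}}}{n-s-2m}.
\]

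This is exactly how the paper proceeds, and it is considerably shorter than your plan. The paper observes in one line that any $i\in S(G\!R)\cup SL(G\!R)\cup D(G\!R)$ must have a neighbour $j\in M(G)$, and then uses \emph{only} the single term $\sigma_j=(1-\lambda_{\mathrm{S}})(1-\lambda_{\mathrm{R}})/m$ together with the explicit value $\lambda_{\mathrm{R}}=\rho$ to conclude that this one contribution already dominates $(1-\lambda_{\mathrm{S}})\lambda_{\mathrm{R}}\cdot 3/(n-s-2m)$. Your three-way split and the auxiliary $\bm{\sigma}^{\mathrm{R}}$ bookkeeping are not needed: once one $M(G)$-neighbour is secured (and your case analysis does secure it), the whole quantitative comparison collapses to the displayed inequality above.
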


\begin{proof}
 In this case, $i$ must have a neighbour, $j$, in $M(G)$. For otherwise $i$ would be a singleton node in $H$ or a singleton leaf in $H$ and so $i$ would belong to $S(G) \cup M(G)$ and not to $R(G)$. Now,the probability of $S$ putting a seeking resource in $j$ is
\begin{eqnarray*}
\sigma_j &= &(1-\lambda_{\mathrm{S}})(1-\lambda_{\mathrm{R}})\left(\frac{1}{m}\right)\\ &\geq &
(1-\lambda_{\mathrm{S}})\min\!\left(1,\frac{3m(f(n-s-1)+\beta)}{3m(f(n - s-1) + \beta)+(n - s - 2m)(f(n - s-2) + \beta)}\right)\!\left(\frac{1}{m}\right) \\&= &
(1-\lambda_{\mathrm{S}})\left(\frac{3(f(n-s-1)+\beta)}{3m(f(n - s-1) + \beta)+(n - s - 2m)(f(n - s-2) + \beta)}\right) \\&> &
(1-\lambda_{\mathrm{S}})\left(\frac{3(f(n-s-2)+\beta)}{3m(f(n - s-1) + \beta)+(n - s - 2m)(f(n - s-2) + \beta)}\right) \\&= &
(1-\lambda_{\mathrm{S}})\lambda_{\mathrm{R}}\left(\frac{3}{n-s-2m}\right).
\end{eqnarray*}
Thus $i$ is caught with probability at least $(1-\lambda_{\mathrm{S}})\lambda_{\mathrm{R}} 3 / (n-s-2m)$.
\end{proof}

We now use these characterisations to compute lower bounds on the expected payoff of the seeker when the hider hides in different parts of the network.

\begin{lemma}
Conditional on $H$ hiding in a node of $R(G)$ and $S$ using $\bm{\sigma}$, the expected payoff of $S$ is at least 
\begin{multline}
L^{\mathrm{R}}(n,m,s) =
\left(1-\lambda_{\mathrm{S}}\right)\bigg(\lambda_{\mathrm{R}} \bigg(\left(\frac{3}{n-s-2m}\right)\beta - \left(1 - \frac{3}{n-s-2m}\right)f(n-s-1)\bigg)\\
 -\left(1-\lambda_{\mathrm{R}}\right)f(n-s-2)\bigg) - \lambda_{\mathrm{S}} f(n-s)
 \end{multline}
\end{lemma}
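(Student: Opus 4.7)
The plan is to decompose the expected payoff of $S$ along the three branches of $\bm{\sigma}$, weighted by $\lambda_{\mathrm{S}}$, $(1-\lambda_{\mathrm{S}})\lambda_{\mathrm{R}}$, and $(1-\lambda_{\mathrm{S}})(1-\lambda_{\mathrm{R}})$, and to bound each piece from below. A case split on $i\in R(G)$ is needed: either $i\in R(G)\setminus(S(G\!R)\cup S\!L(G\!R)\cup D(G\!R))$ (where Lemma~\ref{capt1} applies), or $i\in S(G\!R)\cup S\!L(G\!R)\cup D(G\!R)$ (where Lemma~\ref{capt2} applies). The $\bm{\sigma}^{\mathrm{S}}$ piece is identical in both cases: $S$ inspects an isolated node, so $i$ is never captured and $|C_i(G-k)|\leq n-s$, yielding a contribution of $-\lambda_{\mathrm{S}} f(n-s)$ to the lower bound.

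In the first case, Lemma~\ref{capt1} gives conditional $\bm{\sigma}^{\mathrm{R}}$-capture probability at least $3/(n-s-2m)$, and when $i$ is not captured the residual component is contained in $V\setminus(S(G)\cup\{k\})$ and thus has at most $n-s-1$ vertices. Since $p\beta-(1-p)f(n-s-1)$ is nondecreasing in $p$, the $\bm{\sigma}^{\mathrm{R}}$ branch contributes at least the quantity inside the $\lambda_{\mathrm{R}}$-bracket of $L^{\mathrm{R}}$. For the $\bm{\sigma}^{\mathrm{M}}$ branch, the key observation is that any $k\in M(G)$ has a unique leaf-neighbour $\ell\in S\!L(G)$ whose only edge is to $k$; removing $k$ isolates $\ell$ and hence $|C_i(G-k)|\leq n-s-2$. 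Using $\beta\geq 0\geq -f(n-s-2)$ in the capture case, this branch contributes at least $-(1-\lambda_{\mathrm{S}})(1-\lambda_{\mathrm{R}}) f(n-s-2)$, and the three pieces sum to $L^{\mathrm{R}}(n,m,s)$.

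In the second case, Lemma~\ref{capt1} does not apply, so I bound the $\bm{\sigma}^{\mathrm{R}}$ branch pessimistically by $-(1-\lambda_{\mathrm{S}})\lambda_{\mathrm{R}} f(n-s-1)$. Capture is instead delivered by the $\bm{\sigma}^{\mathrm{M}}$ branch: by Lemma~\ref{capt2}, $i$ has a neighbour $j\in M(G)$, and since $\sigma_j=(1-\lambda_{\mathrm{S}})(1-\lambda_{\mathrm{R}})/m$, the conditional $\bm{\sigma}^{\mathrm{M}}$-capture probability is at least $1/m$. The same leaf-orphaning bound gives $|C_i(G-k)|\leq n-s-2$ for non-captures, so the $\bm{\sigma}^{\mathrm{M}}$ branch contributes at least $(1-\lambda_{\mathrm{S}})(1-\lambda_{\mathrm{R}})[(1/m)\beta-(1-1/m)f(n-s-2)]$. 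Summing the three pieces and substituting the specific value of $\lambda_{\mathrm{R}}$ used in the proof of Lemma~\ref{capt2}, namely $\lambda_{\mathrm{R}}=(n-s-2m)(f(n-s-2)+\beta)/[3m(f(n-s-1)+\beta)+(n-s-2m)(f(n-s-2)+\beta)]$, the elementary identity $(\alpha+\beta)\gamma-(\gamma+\beta)\alpha+\beta(\alpha-\gamma)=0$ with $\alpha=f(n-s-1)$ and $\gamma=f(n-s-2)$ drives the difference with $L^{\mathrm{R}}(n,m,s)$ to zero.

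The main obstacle is the second case: matching the lower bound $L^{\mathrm{R}}$ requires the specific calibration of $\lambda_{\mathrm{R}}$ from the proof of Lemma~\ref{capt2}, since that calibration exactly balances the shortfall of capture in the $\bm{\sigma}^{\mathrm{R}}$ branch against the capture gain from $\bm{\sigma}^{\mathrm{M}}$ at a node $j\in M(G)$ adjacent to $i$.
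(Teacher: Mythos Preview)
Your proposal is correct and follows essentially the same branch-by-branch decomposition as the paper. The paper's own proof is much terser: it simply invokes Lemmas~\ref{capt1} and~\ref{capt2} together to assert that the \emph{total} capture probability is at least $(1-\lambda_{\mathrm{S}})\lambda_{\mathrm{R}}\,3/(n-s-2m)$, then bounds the non-capture damage branch by branch ($n-s-1$, $n-s-2$, $n-s$). It does not perform your explicit case split on whether $i$ lies in $R(G)\setminus(S(G\!R)\cup S\!L(G\!R)\cup D(G\!R))$ or in $S(G\!R)\cup S\!L(G\!R)\cup D(G\!R)$.

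Your more careful treatment of the second case---where the capture actually arrives via a neighbour $j\in M(G)$ rather than via $\bm{\sigma}^{\mathrm{R}}$---makes transparent a point the paper leaves implicit: the bound $L^{\mathrm{R}}$ ``attributes'' the capture to the $\bm{\sigma}^{\mathrm{R}}$ branch (with residual damage $f(n-s-1)$), whereas in this case the capture really sits in the $\bm{\sigma}^{\mathrm{M}}$ branch (residual damage $f(n-s-2)$). That reallocation is harmless precisely because, with the specific $\lambda_{\mathrm{R}}$ already used inside the proof of Lemma~\ref{capt2}, one has $(1-\lambda_{\mathrm{R}})(\beta+f(n-s-2))/m = \lambda_{\mathrm{R}}\,3(\beta+f(n-s-1))/(n-s-2m)$, which is exactly the equality you verify. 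So your argument and the paper's coincide once one unpacks how Lemma~\ref{capt2} is being used; you have simply made that dependence on the calibrated $\lambda_{\mathrm{R}}$ explicit rather than hiding it inside the earlier lemma.
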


\begin{proof}  Suppose that $H$ hides in $R(G)$. From lemmas \ref{capt1} and \ref{capt2}, $H$ is captured with probability at least $(1-\lambda_{\mathrm{S}})\lambda_{\mathrm{R}} 3 / (n-s-2m)$ when $S$ chooses $\bm{\sigma}$. If not captured, only one node is removed when $S$ searches in $R(G)$. With probability $(1-\lambda_{\mathrm{S}})( (1-\lambda_{\mathrm{R}})$, $S$ searches in $M(G)$ and removes two nodes. Finally, with probability $\lambda_{\mathrm{S}}$, $S$ searches in $S(G)$,  and does not catch $H$. Then, her payoff is at least $-f(n-s)$ - this happens if $G$ is connected over $n-s$ nodes. 
\end{proof}

Similarly, we  compute a lower bound on the expected payoff of the seeker when the hider hides in $M(G)$ or $SL(G)$:

\begin{lemma} 
Conditional on $H$ hiding in a node of $M(G)\cup SL(G)$, player $S$ by choosing  $\bm{\sigma}$ obtains a payoff of at least 
\begin{multline*}
L^{\mathrm{M}}(n,m,s) = \left(1-\lambda_{\mathrm{S}}\right)\bigg(\left(1-\lambda_{\mathrm{R}}\right) \bigg(\left(\frac{1}{m}\right)\beta - \\
\left(1 - \frac{1}{m}\right)f(n-s-2)\bigg) - \lambda_{\mathrm{R}}f(n-s-1)\bigg) - \lambda_{\mathrm{S}} f(n-s),
\end{multline*}
\end{lemma}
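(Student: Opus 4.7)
The plan is to split the argument by whether the hider's node $i$ lies in $M(G)$ or in $SL(G)$, and then to bound, term by term, the contribution of each branch of the mixed strategy $\bm{\sigma}$ to the seeker's expected payoff.

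First, I would establish the capture probability. If $i \in M(G)$, then $S$ catches $H$ whenever $\bm{\sigma}^{\mathrm{M}}$ places the inspection on $i$ itself, which happens with total probability $(1-\lambda_{\mathrm{S}})(1-\lambda_{\mathrm{R}})/m$. If instead $i \in SL(G)$, then by definition $i$ is a leaf whose unique neighbour is some $j \in M(G)$; inspection at $j$ also captures $H$ and this occurs with the same probability $(1-\lambda_{\mathrm{S}})(1-\lambda_{\mathrm{R}})/m$. In either case, capture contributes at least $(1-\lambda_{\mathrm{S}})(1-\lambda_{\mathrm{R}})(1/m)\beta$ to $S$'s payoff.

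Second, I would bound $H$'s payoff in the three non-capture scenarios. Since $H$ hides outside $S(G)$, the component of $G$ containing $i$ has at most $n-s$ nodes. When $S$ inspects a singleton (probability $\lambda_{\mathrm{S}}$), no node of $H$'s component is removed, so $H$'s residual component has size at most $n-s$ and his payoff is at most $f(n-s)$. When $S$ inspects in $R(G)$ (probability $(1-\lambda_{\mathrm{S}})\lambda_{\mathrm{R}}$), exactly one node is deleted, so $H$'s component has size at most $n-s-1$ and his payoff is at most $f(n-s-1)$. When $S$ inspects some $j' \in M(G)$ distinct from the ``critical'' node identified above (probability $(1-\lambda_{\mathrm{S}})(1-\lambda_{\mathrm{R}})(m-1)/m$), the defining property of $M(G)$ implies $j'$ has a unique leaf neighbour $\ell'$; removing $j'$ isolates $\ell'$, so both $j'$ and $\ell'$ drop out of the component containing $i$ in the worst case, giving a residual component of size at most $n-s-2$ and payoff at most $f(n-s-2)$.

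Finally, combining the capture contribution with the three non-capture bounds and using $\spayoff = -\hpayoff$, the seeker's expected payoff is at least
\begin{equation*}
(1-\lambda_{\mathrm{S}})(1-\lambda_{\mathrm{R}})\!\left(\tfrac{1}{m}\beta - \tfrac{m-1}{m}f(n-s-2)\right) - (1-\lambda_{\mathrm{S}})\lambda_{\mathrm{R}}f(n-s-1) - \lambda_{\mathrm{S}}f(n-s),
\end{equation*}
which is precisely $L^{\mathrm{M}}(n,m,s)$ after rearranging.

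The main subtlety lies in the ``double deletion'' step: when $S$ inspects $j' \in M(G)\setminus\{\text{critical node}\}$, one must correctly argue that $j'$'s leaf neighbour is also lost from $H$'s component, which is what produces the $f(n-s-2)$ bound rather than the weaker $f(n-s-1)$. This uses the definitional fact that every node of $M(G)$ has exactly one leaf neighbour and that this leaf, once its sole neighbour is removed, is no longer connected to $i$. Once this observation is in place, the remaining work is purely arithmetical bookkeeping over the four branches of $\bm{\sigma}$.
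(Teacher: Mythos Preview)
Your argument is correct and follows essentially the same route as the paper's proof: bound the capture probability by $(1-\lambda_{\mathrm{S}})(1-\lambda_{\mathrm{R}})/m$, then bound the hider's residual-component size in each of the three non-capture branches ($S(G)$, $R(G)$, $M(G)$) by $n-s$, $n-s-1$, and $n-s-2$ respectively. You actually supply more detail than the paper on the ``double deletion'' point, which the paper asserts without justification; your observation that removing $j'\in M(G)$ also severs its unique leaf $\ell'$ from $i$'s component (and that if $j',\ell'$ lie in a different component then $i$'s component already has at most $n-s-2$ nodes) is exactly what is needed.
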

\begin{proof} 
  The probability of capture of $H$ is at least $(1-\lambda_{\mathrm{S}})(1-\lambda_{\mathrm{R}}) 1 / m$.  If $H$  is not   captured, $S$ guarantees that the component of the hider has size at most $n-s-2$ with probability $(1-\lambda_{\mathrm{S}})(1-\lambda_{\mathrm{R}})$ when the attack is in $M(G)$. Furthermore,  at least one node is removed with probability $(1-\lambda_{\mathrm{S}})\lambda_{\mathrm{R}}$ when the attack is in $R(G)$. Finally, the component containing $H$  has size at most $n-s$ when the attack is in $S(G)$, and this happens  with probability $\lambda_{\mathrm{S}}$.
\end{proof}

We now set the value of $\lambda_{\mathrm{R}}$ in order to equalize the probability of capture of the hider in different regions of the network, outside singleton nodes. To this end, we assume that  there exist non-singleton nodes in $G$, $S(G)\neq V$,.
Let
\begin{align*}
\rho & = \frac{(n-s-2m)(f(n-s-2)+\beta)}{3m(f(n - s-1) + \beta)+(n - s - 2m)(f(n - s-2) + \beta)} \\
        & = 1 - \frac{3m(f(n-s-1)+\beta)}{3m(f(n - s-1) + \beta)+(n - s - 2m)(f(n - s-2) + \beta)}
\end{align*}
and
\begin{equation}\label{lambr}
\lambda_{\mathrm{R}} = \begin{cases}
                       0, & \textrm{if $R(G)= \varnothing$,}\\
                       \rho, & \textrm{otherwise.}
                       \end{cases}
\end{equation}
Clearly $\rho \in [0,1]$ and $\lambda_{\mathrm{R}} \in [0,1]$.

It is straightforward  to verify that the chosen value of $\lambda_{\mathrm{R}}$ ensures that  $L^{\mathrm{R}}(n,m,s) = L^{\mathrm{M}}(n,m,s)$, for any $s\in \{0,\ldots,n-4\}$. 
Hence the lower bound on the payoff of player $S$ in $G$ when $H$ hides outside singleton nodes is 

\begin{equation}\label{bound1}
L(n,m,s) = L^{\mathrm{R}}(n,m,s) = L^{\mathrm{M}}(n,m,s) = \left(1-\lambda_{\mathrm{S}}\right)A(n,m,s) - \lambda_{\mathrm{S}}f(n-s)\end{equation}
where
\begin{multline*}
A(n,m,s) = \begin{cases}
         \frac{\beta}{m} - \left(\frac{m-1}{m}\right)f(n-s-2), \ \textrm{if $R(G)= \varnothing$,}\\
         \left(\frac{D(n,s)D(n-1,s)}{3D(n,s)-2D(n-1,s)}\right)\left(\frac{3(\beta - T(n,s))}{m(3D(n,s)-2D(n-1,s))+(n-s)D(n-1,s)}-1\right)
         +\beta,\\ \ \textrm{otherwise}
         \end{cases}
\end{multline*}
with
\begin{equation*}
D(n,s) = f(n-s-1)+\beta
\end{equation*}
and
\begin{equation*}
T(n,s) = (n - s - 3)D(n,s) - (n - s - 2)D(n-1,s) + \beta
\end{equation*}
In particular, the derivation above is valid for the extreme cases of $m = 0$ and $m = (n-s)/2$).
Notice that $A(n,m,s)$ is strictly increasing in $m$ if $T(n,s)>\beta$, is strictly decreasing in $m$ if $T(n,s)<\beta$, and is constant if $T(n,s) = \beta$.

To complete the definition of strategy $\bm{\sigma}$ we compute the value of the probability of seeking in singleton nodes, $\lambda_{\mathrm{S}}$.
Conditional on $H$ hiding in a node of $S(G)$, using any of the strategies $\bm{\sigma}$ defined above, player $S$ obtains payoff of at least $L^{\mathrm{S}}(n,m,s) = \lambda_{\mathrm{S}}B(s) - \left(1-\lambda_{\mathrm{S}}\right)f(1)$, where
\begin{equation*}
B(s) = \left(\frac{1}{s}\right)\beta - \left(1 - \frac{1}{s}\right)f(1),
\end{equation*}
regardless of the strategy of the hider, as the probability of capture is $\lambda_{\mathrm{S}}/s$ and, in the case of not capturing the hider, $S$ gets payoff $-f(1)$.
Let 
\begin{equation*}
\lambda_{\mathrm{S}} = \begin{cases}
                       1, & \textrm{if $s=n$,}\\
                       \frac{A(n,m,s)+f(1)}{A(n,m,s) + B(s) + f(1) + f(n-s)}, & \textrm{if $s \neq n$ and  $A(n,m,s) > -f(1)$,}\\
                       0, & \textrm{otherwise.}
                       \end{cases}
\end{equation*}
To see that $\lambda_{\mathrm{S}} \in [0,1]$, notice that $B(s) > -f(1) \geq -f(n-s)$, for any $\beta\geq 0$ and $0\leq s \leq n-4$.

It is straightforward  to verify the following for any $s\in \{0,\ldots,n-4\}$:
\begin{itemize}
\item[(i)] if  $A(n,m,s) > -f(1)$, then $L^{\mathrm{s}}(n,m,s) = L(n,m,s)$.
\item [(ii)] if $A(n,m,s) \leq -f(1)$ then $L^{\mathrm{s}}(n,m,s) \geq L(n,m,s)$.
\end{itemize}

So, finally,  if $s\leq n-4$,  the lower bound on the payoff of player $S$ in $G$ is given by

$$Q(n,m,s) = (1-\lambda_{\mathrm{S}})A(n,m,s) - \lambda_{\mathrm{S}}f(n-s),$$
Of course, if  $s=n$,  $\bm{\sigma}$ mixes uniformly across the singletons with $\lambda_{\mathrm{S}} = 1$. 

To summarize, the lower bound on the payoff of $S$ in $G$,   secured by the strategy $\bm{\sigma}$,  is given by
\begin{multline}\label{defq}
Q(n,m,s) = 
\begin{cases}
B(n), & \textrm{$s=n$} \\
\frac{A(n,m,s)B(s) - f(1)f(n-s)}{A(n,m,s) + B(s) + f(1)+f(n-s)}, & \textrm {if $s\leq n-4$ and $A(n,m,s) > -f(1)$},\\
A(n,m,s), & \textrm{otherwise},
\end{cases}
\end{multline}

Recall that $A(n,m,s)$ 
is increasing in $m$ when $T(n,s) > \beta$, decreasing in $m$ when $T(n,s) < \beta$, and constant in $m$  when $T(n,s) = \beta$. This, fact, together with Claim~\ref{claim:decr} in the appendix implies that when $s\leq n-4$, $Q(n,m,s)$ is decreasing in $m$ when $T(n,s) < \beta$, increasing in $m$ when $T(n,s) > \beta$, and is constant in $m$ when $T(n,s) = \beta$.
{\em So for all   $s \in \{0,\ldots,n-4\}$, $Q(n,m,s)$ is minimised at $m = (n-s)/2$, when $T(n,s) < \beta$, and is minimised at $m = 0$, when $T(n,s) > \beta$.  }

We now turn to the construction of networks that are optimal for the hider.  Fix the number of singleton nodes, $s\leq n-4$.
Define a new function $\bar{Q}(n,s)$ as follows 
\begin{equation*}
\bar{Q}(n,s) = \begin{cases}
Q(n,0,s), & \textrm{if $0\leq s \leq n-4$ and $T(n,s) \geq \beta$,} \\
Q(n,(n-s)/2,s), & \textrm{if $0\leq s \leq n-4$, $T(n,s) < \beta$ and $n-s$ is even,} \\
Q(n,(n-s-3)/2,s), & \textrm{if $0\leq s \leq n-4$, $T(n,s) < \beta$ and $n-s$ is odd.}
\end{cases}
\end{equation*}

Consider first the case where $n-s$ is even. 
\begin{lemma} 
\label{even}
Suppose $H$ builds a network with $s$ singleton nodes such that $n-s$ is even. Then, an optimal strategy for $H$ provides $H$ with payoff $-\bar{Q}(n,s)$.
If $T(n,s) < \beta$, $G$ is optimal if  the subnetwork over $n-s$ nodes is a maximal core-periphery network. If $T(n,s) > \beta$,  $G$ is optimal if  the subnetwork over $n-s$ nodes is a cycle.
\end{lemma}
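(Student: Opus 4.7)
The plan is to combine an upper bound on the hider's value, coming from the already-constructed seeker strategy $\bm{\sigma}$, with matching lower bounds achieved by explicit hider strategies on the cycle and on the maximal core-periphery network.

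First, by the preceding lemmas, for any graph $G$ with exactly $s$ singletons, playing $\bm{\sigma}$ with the calibrated $\lambda_{\mathrm{R}}$ and $\lambda_{\mathrm{S}}$ guarantees the seeker a payoff of at least $Q(n,m(G),s)$, where $m(G)=|M(G)|$, so the hider's value on such a $G$ is at most $-Q(n,m(G),s)$. Since $Q(n,\cdot,s)$ is monotone in $m$ with the sign of its slope controlled by $T(n,s)-\beta$, the minimum over the admissible range $m\in\{0,1,\ldots,(n-s)/2\}$ is attained at $m=0$ when $T(n,s)>\beta$ and at $m=(n-s)/2$ when $T(n,s)<\beta$. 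Hence the supremum of the hider's value over all networks with $s$ singletons is at most $-\bar{Q}(n,s)$.

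Second, for $T(n,s)>\beta$, I would take $G$ to consist of $s$ isolated nodes plus a cycle on the remaining $n-s$ nodes, so $m(G)=0$. The hider mixes with total mass $\mu$ uniformly on the singletons and mass $1-\mu$ uniformly on the cycle. By vertex-transitivity of the cycle, the seeker's expected payoff depends only on the type of node attacked: a singleton attack yields $\mu B(s)-(1-\mu)f(n-s)$ (the cycle component is untouched, so a hider on the cycle lives in a component of size $n-s$), and a cycle attack yields $(1-\mu)A(n,0,s)-\mu f(1)$ (removing a cycle node turns the cycle into a path of $n-s-1$ nodes, while a hider on a singleton still has a component of size $1$). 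Choosing $\mu$ to equate these yields a common value that simplifies to $Q(n,0,s)=\bar{Q}(n,s)$, so the seeker's best response yields her at most $\bar{Q}(n,s)$ and the hider secures $-\bar{Q}(n,s)$.

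Third, for $T(n,s)<\beta$ with $n-s$ even, I would take $G$ to be $s$ isolated nodes plus a maximal core-periphery graph on the remaining $n-s$ nodes, so $m(G)=(n-s)/2$. The hider mixes with total mass $\mu$ uniformly on the singletons and mass $1-\mu$ uniformly on the $(n-s)/2$ periphery leaves. By $2$-connectivity of the core, removing any core node leaves the rest of the core (together with all remaining periphery leaves attached to it) connected, giving a residual component of size $n-s-2$. So a singleton attack yields $\mu B(s)-(1-\mu)f(n-s)$ and any core attack yields $(1-\mu)A(n,(n-s)/2,s)-\mu f(1)$; choosing $\mu$ to equate these produces the common value $Q(n,(n-s)/2,s)=\bar{Q}(n,s)$. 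A periphery attack is strictly inferior because, conditional on missing the hider, the residual component of any surviving leaf has size $n-s-1>n-s-2$, which reduces the seeker's payoff by strict monotonicity of $f$. Hence the seeker's best response again yields her at most $\bar{Q}(n,s)$ and the hider secures $-\bar{Q}(n,s)$.

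The main technical step is the algebraic verification that the common value obtained by equating the seeker's two effective action types simplifies precisely to $Q(n,m,s)$ as defined in \eqref{defq}. This mirrors, from the hider's side, the calibration of $\lambda_{\mathrm{S}}$ on the seeker's side (chosen so that $L^{\mathrm{R}}(n,m,s)=L^{\mathrm{S}}(n,m,s)$); the two computations meet at the saddle point of value $\bar{Q}(n,s)$.
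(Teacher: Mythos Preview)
Your proposal is correct and follows essentially the same route as the paper: use the seeker's strategy $\bm{\sigma}$ and the monotonicity of $Q(n,m,s)$ in $m$ for the upper bound, then exhibit the cycle (for $T(n,s)>\beta$) or the maximal core-periphery network (for $T(n,s)<\beta$) together with a hider mixture between singletons and the component calibrated to equalize the seeker's two relevant action types, yielding value $-\bar{Q}(n,s)$. The only small point to make explicit is the boundary case $\bar{A}(n,s)\le -f(1)$ (or $s=0$), where no interior $\mu$ equates the two expressions and one simply takes $\mu=0$; the paper handles this via the piecewise definition of $\kappa$ in~\eqref{eq:kappa}, and your argument goes through identically once this is noted.
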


\begin{proof} Fix $s$ such that $n-s$ is even. Let

\begin{equation*}\bar{A}(n,s) = \begin{cases}
               A(n,(n-s)/2,s), & \textrm{if $T(n,s) < \beta$,} \\
               A(n,0,s), & \textrm{if $T(n,s) \geq \beta$.}
               \end{cases}
\end{equation*}
and let
\begin{equation}
\label{eq:kappa}
\kappa= \begin{cases}
        \frac{B(s)+f(1)}{\bar{A}(n,s)+B(s)+f(n-s)+f(1)} & \textrm{if $\bar{A}(n,s)  > -f(1)$,}\\
        1, & \textrm{otherwise.}
        \end{cases}
\end{equation}
Let $H$ choose a network $G$ such that :
\begin{itemize}
\item[(i)] $G$ has exactly $s$ singletons.
\item[(ii)] $G$ is a maximal core periphery on $n-s$ nodes if  $T(n,s) < \beta$.
\item[(iii)] $G$ is a cycle on $n-s$ nodes if  $T(n,s)\geq \beta$.
\end{itemize}

Moreover,  suppose that  the hider hides in the component of size $n-s$ with probability $\kappa$, mixing uniformly on the periphery nodes in the case of the component being a core-periphery network,  and mixing uniformly over all its nodes in the case of the component being a cycle.  Also,  she hides in the singleton nodes with probability $1-\kappa$,  mixing uniformly on them. By similar arguments to those used for $\lambda_{\mathrm{S}}$ above, $\kappa \in [0,1]$ and so the strategy is valid.
If the seeker seeks in the singleton nodes, this yields payoff of at least $\kappa f(n-s) - (1-\kappa)B(s)$ to the hider. Similarly, if the seeker seeks in the core-periphery component, this yields payoff of at least $-\kappa  \bar{A}(n,s) + (1-\kappa) f(1)$ to the hider. With the value of $\kappa$, above, both values are equal in the case of $\bar{A}(n,s) > -f(1)$, and the latter is greater, otherwise. 

Hence, the strategy guarantees a payoff $-\kappa \bar{A}(n,s) + (1-\kappa) f(1)$ to the hider. Note that
\[
 -\kappa \bar{A}(n,s) + (1-\kappa) f(1) = - \bar{Q}(n,s)\]
 
 Recall that we have shown that $\bar{Q}(n,s)$ is the minimal payoff the seeker can get on any network with exactly $s$ singleton nodes. Since the game is zero-sum, $-\bar{Q}(n,s)$ is the maximal payoff the hider can get on any network with exactly $s$ singleton nodes and hence the network constructed above as well as the hiding strategy must be optimal for the hider.  
\end{proof}

Next, consider the case of $n-s$ being odd. 

\begin{lemma}
\label{odd}  
Suppose that $n-s$ is odd.  Then, an optimal strategy for $H$ gives him a payoff of $-Q(n,(n-s-3)/2,s)$.
If $T(n,s) < \beta$, $G$ is optimal if  the subnetwork over $n-s$ nodes is a maximal core-periphery network. If $T(n,s) > \beta$,  $G$ is optimal if it  the subnetwork over $n-s$ nodes is a cycle..
\end{lemma}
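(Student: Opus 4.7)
The plan is to follow the two-part structure of Lemma~\ref{even}: (i) exhibit a specific network together with a hiding strategy that achieves payoff $-Q(n,(n-s-3)/2,s)$, and (ii) show that no network with $s$ singleton nodes allows the hider to do better. The only new ingredients, compared with the even case, are the three-orphaned-node structure of the maximal core-periphery network when $n-s$ is odd, and the fact that the general seeker strategy $\bm{\sigma}$ from equation~(\ref{eq:sigma}) fails to yield a tight bound precisely when $|R(G)|=1$.

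For the achievability part in the case $T(n,s)<\beta$, let $G$ consist of $s$ singletons together with a maximal core-periphery network on $n-s$ nodes; thus there are $(n-s-3)/2$ periphery nodes, $(n-s-3)/2$ core nodes $M(G)$ each carrying one periphery leaf, and three orphaned core nodes $o_0,o_1,o_2$, where the only neighbours of $o_0$ are $o_1$ and $o_2$. Define $\bar{A}(n,s)=A(n,(n-s-3)/2,s)$ and $\kappa$ as in~(\ref{eq:kappa}). The hider mixes uniformly over the singletons with total mass $1-\kappa$ and places total mass $\kappa$ on the periphery nodes together with $o_0$, with the internal split chosen so that the hider's payoff is the same whether the seeker inspects (a) a node of $M(G)$, (b) the node $o_0$, or (c) one of $o_1,o_2$. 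The key structural observations are that inspecting a node of $M(G)$ isolates a single periphery leaf and leaves a component of size $n-s-2$, while inspecting any of $o_0,o_1,o_2$ leaves a component of size $n-s-1$ because $o_0$ remains attached through the surviving orphan and the $2$-connected core stays connected after removing one of its nodes. A payoff computation mirroring Lemma~\ref{even}, now with $o_0$ as an additional hiding spot, gives the value $-Q(n,(n-s-3)/2,s)$. For $T(n,s)\ge\beta$, $G$ is $s$ singletons plus a cycle on $n-s$ nodes, and the argument is literally identical to the even case since cycles are well-defined for any $n-s\ge 3$.

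For the upper bound, the monotonicity of $Q$ in $m$ (decreasing for $T<\beta$, increasing for $T>\beta$) implies that the general seeker strategy $\bm{\sigma}$ already guarantees seeker payoff at least $Q(n,(n-s-3)/2,s)$ on every network with $m\le(n-s-3)/2$ in the $T<\beta$ case, and on every feasible network in the $T>\beta$ case, where the minimum of $Q$ over integer $m$ is at $m=0$. The only remaining case is $m=(n-s-1)/2$, i.e.\ $|R(G)|=1$, for which $\bm{\sigma}^{\mathrm{R}}$ as defined is not a valid probability distribution; the unique residual node $v$ must have all its neighbours in $M(G)$, since otherwise $v$ would itself be a singleton, a singleton leaf, or an element of $M(G)$. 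Here one constructs a tailored seeker strategy by redirecting the $\bm{\sigma}^{\mathrm{R}}$ mass onto $v$, then re-optimising $\lambda_{\mathrm{R}}$ so that the hider's capture probabilities are equalised across $v$ and across the nodes of $M(G)\cup SL(G)$, and one checks algebraically that the resulting seeker guarantee equals exactly $Q(n,(n-s-3)/2,s)$.

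The main obstacle is this final $|R(G)|=1$ sub-case: one must identify a replacement for $\bm{\sigma}^{\mathrm{R}}$ whose guarantee matches $Q(n,(n-s-3)/2,s)$ and not the smaller value $Q(n,(n-s-1)/2,s)$ that the general formula would naively suggest. The key difficulty is that the capture probability of a hider at $v$ now depends on the number of neighbours $v$ has in $M(G)$, which may equal the full size of $M(G)$; matching this against the required bound reduces to verifying an identity relating $A(n,(n-s-3)/2,s)$ and $A(n,(n-s-1)/2,s)$, reflecting the loss of efficiency when the residual mass is concentrated on a single node rather than spread across three.
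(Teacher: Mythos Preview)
Your overall architecture matches the paper's: achievability via the three-orphan maximal core-periphery network with the hider mixing on periphery nodes and the middle orphan, and optimality via the monotonicity of $Q(n,m,s)$ in $m$ together with a separate treatment of the extreme case $m=(n-s-1)/2$. The cycle case and the $m\le (n-s-3)/2$ case are handled exactly as in the paper.

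The gap is in your handling of $m=(n-s-1)/2$. First, a minor inaccuracy: $\bm{\sigma}^{\mathrm{R}}$ \emph{is} a valid probability distribution when $|R(G)|=1$; it is simply the Dirac mass at $v$ (since $v$ is a singleton in $G\!R$, it falls under the first clause with $l_v(G\!R)=0$, giving weight $1$). The real issue is that the resulting bound $Q(n,(n-s-1)/2,s)$ is too weak, not that the strategy is undefined. More importantly, your proposed remedy---keeping mass on $v$ and ``re-optimising $\lambda_{\mathrm{R}}$ so that the hider's capture probabilities are equalised across $v$ and across $M(G)\cup SL(G)$''---does not produce an interior optimum. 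Because $v$ has at least two neighbours in $M(G)$, a hider at $v$ is caught at least as often as a hider at any singleton leaf for \emph{every} $\lambda_{\mathrm{R}}\in[0,1]$; the two capture probabilities can never be equalised, and the seeker's guaranteed payoff (driven by the $SL$ case) is monotone decreasing in $\lambda_{\mathrm{R}}$. The optimum is therefore $\lambda_{\mathrm{R}}=0$, i.e.\ the seeker should mix only on $M(G)$ and the singletons, which is precisely the paper's strategy in Lemma~\ref{lemma:orphan}. The resulting guarantee is $X(n,s)=\tfrac{2\beta}{n-s-1}-\bigl(1-\tfrac{2}{n-s-1}\bigr)f(n-s-2)$, and what must be checked is the \emph{strict inequality} $X(n,s)>A(n,(n-s-3)/2,s)$ (and hence $Y(n,s)>Q(n,(n-s-3)/2,s)$), not an identity. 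So your sketch correctly locates the delicate sub-case, but the mechanism you describe for closing it would not yield the claimed equality; the paper instead drops the $R$-mass entirely and verifies a strict inequality.
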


\begin{proof}
Let 
\begin{equation*}
\bar{A}(n,s) = \begin{cases}
               A(n,(n-s-3)/2,s), & \textrm{if $T(n,s) < \beta$,} \\
               A(n,0,s), & \textrm{if $T(n,s) \geq \beta$.}
               \end{cases}
\end{equation*}
and let $\kappa$ be defined as in~\eqref{eq:kappa}.
If $T(n,s) \geq \beta$ than choosing a cycle over $n-s$ nodes and using the same hiding strategy as in the case of $n-s$ being even, the hider secures the highest possible payoff on a network with exactly $s$ singleton nodes. 
Suppose that $T(n,s) < \beta$. Since  $(n-s)/2$ is not an integer,  the hider cannot attain the upper bound on his payoff determined by the lower bound on the payoff to the seeker, $\bar{Q}(n,s)$. Recall that if $T(n,s) < \beta$ then for any $0 \leq s \leq n-4$, $Q(n,m,s)$ is decreasing in $m$. We show  below for any $0\leq s \leq n-4$,  the hider can attain payoff $-Q(n,(n-s-3)/2,s)$,  and that this is the maximal payoff he can secure when $n-s$ is odd.

Suppose that the hider chooses a maximal core-periphery network (with three orphaned nodes) over $n-s$ nodes (c.f. Figure~\ref{fig:cp3orph}).

\begin{center}
\begin{figure}[!htb]
\centering
\includegraphics[scale=.75]{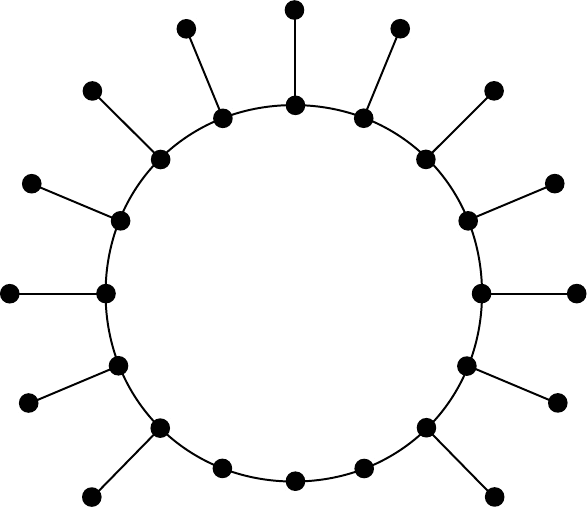}
\caption{A core-periphery network over $23$ nodes with $3$ orphaned nodes.}
\label{fig:cp3orph}
\end{figure}
\end{center}

Consider a strategy of the hider
\begin{equation*}
\bm{\eta} = \kappa (\mu \bm{\eta}^{\mathrm{M}} + (1-\mu) \bm{\eta}^{\mathrm{R}}) + (1 - \kappa)\bm{\eta}^{\mathrm{S}},
\end{equation*}
where
\begin{equation*}
\eta_i^{\mathrm{M}} = \begin{cases}
                           \frac{1}{m}, & \textrm{if $i \in S\!L(G)$,} \\
                           0, & \textrm{otherwise,}
                           \end{cases}
\end{equation*}
(i.e. $\bm{\eta}^{\mathrm{M}}$ mixes uniformly on the periphery nodes of $G$),
\begin{equation*}
\eta_i^{\mathrm{R}} = \begin{cases}
                           1, & \textrm{if $i$ is the middle orphaned node in $G$,} \\
                           0, & \textrm{otherwise,}
                           \end{cases}
\end{equation*}
\begin{equation*}
\eta_i^{\mathrm{S}} = \begin{cases}
                           \frac{1}{s}, & \textrm{if $i \in S(G)$,} \\
                           0, & \textrm{otherwise.}
                           \end{cases}
\end{equation*}
(i.e. $\bm{\eta}^{\mathrm{S}}$ mixes uniformly on the singleton nodes of $G$), and
\begin{equation*}
\mu = \frac{(n - s - 3)f(n-s-2) + (n - s - 3)\beta}{(n - s - 3)f(n-s-1) + 2f(n-s-2) + (n - s - 1)\beta}.
\end{equation*}
It is immediate to see that $\mu \in [0,1]$ and so the hiding strategy is valid. 
If the seeker seeks in the orphaned nodes of the core-periphery component, this yields payoff of at least $\kappa (\mu f(n-s-1) - (1-\mu)\beta) + (1-\kappa)f(1)$ to the hider and, since the game is zero-sum, of at most minus this value to the seeker. Similarly, if the seeker seeks in periphery nodes or their neighbours in the core-periphery component, this yields payoff of at least $\kappa (\mu (-2\beta/(n-s-3) + (1-2/(n-s-3))f(n-s-2)) + (1-\mu)f(n-s-2)) + (1-\kappa)f(1)$  to the hider and of at most minus this value to the seeker. With the value of $\mu$, above, both these guarantees are equal. 

It is  straightforward  to verify that
\begin{eqnarray*}
 \kappa (\mu f(n-s-1) - (1-\mu)\beta) + (1-\kappa)f(1)&= &-\kappa A(n,(n-s-3)/2,s) + (1-\kappa) f(1) \\&= &-Q(n,(n-s-3)/2,s).\end{eqnarray*}

Since $Q(n,(n-s-3)/2,s)$ is a lower bound on the payoff that the seeker can secure in a network with exactly $s$ singleton nodes and at most $(n-s-3)/2$ singleton leaves, the negative of  this value is the highest payoff that the hider can secure in a network with exactly $s$ singleton nodes and at most $(n-s-3)/2$ singleton leaves. The only networks that could yield a higher payoff to the seeker are networks with exactly $s$ singleton nodes and $(n-s-1)/2$ singleton leaves. But we show in Lemma \ref{lemma:orphan} in the appendix that these networks have a lower value for the hider.
\end{proof}

Since the game is zero-sum, the hider maximises his payoff when the seeker's payoff is minimised. Therefore,  an optimal network has $s \in S^{*}(n)$ singleton nodes, where
\begin{equation*}
S^{*}(n) = \arg\min_{s \in \{0,\ldots,n\}} \bar{Q}(n,s).
\end{equation*}

Lemmas~\ref{even} and~\ref{odd} have therefore proved the characterization result that we summarize in the following Theorem.

\begin{theorem}
\label{th:singlemain}
For any number of nodes, $n \geq 1$, and any $\beta \geq 0$ there exists an equilibrium of the game, $((G,h), s)$ such that
\begin{itemize}
\item $G$ has exactly $s\in S^{*}(n)$ singleton nodes and either $s \leq n-4$ or $s = n$.
\item If $T(n,s) \geq \beta$ and $n - s \geq 4$ then $G$ has a cycle component over the remaining $n-s$ nodes.
\item If $T(n,s) < \beta$, $n - s \geq 4$ then $G$ has a maximal core-periphery component over $n-s$ nodes.
\item The hider mixes between hiding in the singleton nodes and in the connected component. When hiding in the singleton nodes, he mixes uniformly across all these nodes. When hiding in the connected component, he mixes uniformly across all the nodes (when it is a cycle), mixes uniformly across the periphery nodes (when it is a maximal core-periphery network and $n-s$ is even), and mixes between hiding in periphery nodes, mixing uniformly across them, and the middle orphaned node (otherwise).
\item The seeker mixes between seeking in the singleton nodes and in the connected component. When seeking in the singleton nodes, he mixes uniformly across all these nodes. When seeking in the connected component, he mixes uniformly across all the nodes (when it is a cycle), mixes uniformly across the core nodes (when it is a maximal core-periphery network and $n-s$ is even), and mixes between seeking in the neighbours of periphery nodes, mixing uniformly across them, and the middle orphaned node (otherwise).
\end{itemize}
Equilibrium payoff to the hider is $-\bar{Q}(n,s)$.
\end{theorem}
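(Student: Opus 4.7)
The theorem is essentially a packaging of Lemmas~\ref{sing}, \ref{even}, and \ref{odd}, together with optimization over the number of singleton nodes. My plan is to exploit the zero-sum structure: the hider's maximum payoff equals the negative of the seeker's minimum payoff, so the proof reduces to (a) producing a seeker strategy that bounds the seeker's payoff from below by $\bar{Q}(n,s)$ on every admissible network with exactly $s$ singletons, and (b) exhibiting a hider's strategy (on a cycle or maximal core-periphery component, per the sign of $T(n,s)-\beta$) that attains this bound, then optimizing over $s$.

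First I would fix an arbitrary $s \in \{0,\ldots,n\}$ and restrict attention to networks with exactly $s$ singleton nodes. By Lemma~\ref{sing}, we can then assume that either $s = n$ (all isolated) or $s \le n-4$ (non-singleton components have at least four nodes). For any such network $G$ I would invoke the seeker's mixed strategy $\bm{\sigma}$ defined in~\eqref{eq:sigma} with $\lambda_{\mathrm{R}}$ as in~\eqref{lambr} and $\lambda_{\mathrm{S}}$ set as in the derivation of $Q(n,m,s)$; the bounds from the lemmas preceding~\eqref{defq} combine to give $\spayoff \ge Q(n,m(G),s)$ for every hider's response on $G$, where $m(G) = |M(G)|$. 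By the monotonicity of $Q(n,m,s)$ in $m$ (decreasing when $T(n,s)<\beta$, increasing when $T(n,s)>\beta$, constant when $T(n,s)=\beta$), minimizing over $m \in \{0,\ldots,\lfloor (n-s)/2\rfloor\}$ yields $\bar{Q}(n,s)$, with $m=0$ in the $T(n,s)\ge\beta$ regime (attained by the cycle) and $m$ as large as possible in the $T(n,s)<\beta$ regime (attained by the maximal core-periphery).

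Next I would invoke Lemmas~\ref{even} and~\ref{odd} to exhibit, for each $s \le n-4$, a specific network $G_s$ (cycle on $n-s$ nodes plus $s$ singletons, or maximal core-periphery plus $s$ singletons, depending on the sign of $T(n,s)-\beta$) and a hider's mixed strategy $(G_s,\bm{\eta})$ such that $\hpayoff = -\bar{Q}(n,s)$ against any seeker's response. This shows that on $G_s$ the hider's minimax payoff is exactly $-\bar{Q}(n,s)$, matching the seeker's lower bound; in particular the strategies described in the theorem form a Nash equilibrium of the subgame induced by $G_s$. For $s=n$ the hider secures $-B(n)$ by mixing uniformly over singletons, and this again matches $\bar{Q}(n,n)$.

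Finally, the hider's first-stage problem is to pick $G$ to maximize his continuation payoff $-\bar{Q}(n,s(G))$, equivalently to minimize $\bar{Q}(n,s(G))$ over admissible $s$. Choosing $s^{*} \in S^{*}(n)$ and the network $G_{s^{*}}$ described above achieves this maximum, yielding equilibrium payoff $-\bar{Q}(n,s^{*})$ and the stated equilibrium strategies. The main subtlety will be the odd-$s$, $T(n,s)<\beta$ case: there the generic bound $Q(n,m,s)$ is not tight (no network has $m=(n-s)/2$), and one must both identify the right replacement bound $Q(n,(n-s-3)/2,s)$ and rule out the alternative configuration with $(n-s-1)/2$ singleton leaves, which is exactly what Lemma~\ref{odd} accomplishes via the auxiliary Lemma~\ref{lemma:orphan} in the appendix. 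Once these tight bounds are in hand, the theorem follows by assembling the pieces.
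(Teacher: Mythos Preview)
Your proposal is correct and follows essentially the same route as the paper: the theorem is indeed just the assembly of Lemma~\ref{sing}, the seeker-strategy bound $Q(n,m,s)$ from~\eqref{defq} with its monotonicity in $m$, Lemmas~\ref{even} and~\ref{odd} (the latter relying on Lemma~\ref{lemma:orphan} for the odd case), followed by minimization over $s$. The only cosmetic point is that where you write $\hpayoff = -\bar{Q}(n,s)$ against any seeker's response, you should write $\hpayoff \ge -\bar{Q}(n,s)$; equality is then forced by the matching seeker bound, exactly as you conclude.
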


We have shown  in the proof of Theorem~\ref{th:singlemain}, that the equilibrium payoff to the seeker in an optimal network with at least one singleton node is a convex combination of $B(s)$ which is greater than $-f(1)$) and $-f(1)$ and so it is at least $-f(1)$. Hence the payoff that the hider can secure in such a network is at most $f(1)$. Thus if the payoff the seeker can secure in a connected component of size $n$, $\bar{A}(n,0)$ is smaller than $ -f(1)$, then the payoff the hider can secure in such a component is $-\bar{A}(n,0) > f(1)$. If that inequality holds,  it is optimal for the hider to choose a connected network without singleton nodes. 

If, on the other hand, the cost of being caught, $\beta$, is sufficiently high then $\bar{A}(n,0) > -f(1)$ and the payoff the hider can secure in a connected network, $-\bar{A}(n,0)$, is less than the payoff he gets if he is not caught in a singleton node. This leads the hider to construct a network with a smaller component and $s\geq 1$ singleton nodes. If the cost of being caught is sufficiently high, it is optimal for the hider to choose a disconnected network with $s = n$ singleton nodes.

The characterization of equilibrium networks provided in Theorem~\ref{th:singlemain} is not complete. This Theorem displays network architectures which achieve the highest possible payoff for the hider, but does not show that these network topologies are unique. As  we prove below, if $T(n,s) < \beta$ the connected component must be a maximal core-periphery network. So in this case we obtain complete characterization of equilibrium networks.If $T(n,s) > \beta$ there exist network topologies other than the cycle which are optimal. We establish necessary properties that the optimal network topologies must possess.

\begin{theorem}
\label{th:singlecomplete}
For any number of nodes, $n \geq 1$, and any $\beta \geq 0$, if $((G,h), s)$ is an equilibrium of the game then
\begin{itemize}
\item $G$ has exactly $s\in S^{*}(n)$ singleton nodes.
\item If $T(n,s) < \beta$, $n - s \geq 4$, then $G$ has a maximal core-periphery component over $n-s$ nodes.
\item If $T(n,s) > \beta$, then $G$ has a $2$-connected component over $n-s$ non-singleton nodes with at least $\lceil (n-s)/3 \rceil$ nodes of degree $2$ and the hider never hides in nodes of degree greater than $2$ in equilibrium.
\end{itemize}
\end{theorem}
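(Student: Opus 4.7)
In any equilibrium the value of the zero-sum game pins the hider's payoff to $-\bar{Q}(n,s^{*})$ with $s^{*} \in S^{*}(n)$. Since Theorem~\ref{th:singlemain} establishes $-\bar{Q}(n,s)$ as the hider's supremal payoff across all networks with $s$ singletons, any violation of a claimed structural property will be argued to strictly widen one of the inequalities behind the seeker-lower-bound $Q(n,m,s)$, contradicting equilibrium. The first bullet is then immediate: if the equilibrium $G$ had $s \notin S^{*}(n)$ singletons, the hider would receive $-\bar{Q}(n,s) < -\bar{Q}(n,s^{*})$, contradicting best response; Lemma~\ref{sing} additionally forces $s \leq n-4$ or $s = n$.

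\textbf{Bullet 2 (core-periphery case).} When $T(n,s) < \beta$, the observation following~\eqref{defq} gives that $m \mapsto Q(n,m,s)$ is \emph{strictly} decreasing, so $m = |M(G)|$ must attain its maximum feasible value: $(n-s)/2$ in the even case and $(n-s-3)/2$ in the odd case (the latter by Lemma~\ref{lemma:orphan} invoked in Lemma~\ref{odd}). Because each element of $M(G)$ is matched to a unique leaf of $S\!L(G)$, maximality of $m$ immediately fixes the core-periphery combinatorics: in the even case every non-singleton node lies in $M(G) \cup S\!L(G)$, with $M(G)$ acting as the core; in the odd case the three residual $R(G)$-nodes must also exhibit the ``one orphan adjacent to the other two'' pattern in the definition of maximal core-periphery. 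It remains to force $2$-connectedness of the core. If it were disconnected the hider's maximal residual component would already be strictly smaller than $n-s$, and if it were connected with a cut vertex $v$, attacking $v$ would split the core and its pendant leaves into pieces each strictly smaller than $n-s-2$; in either case the component-size term $f(n-s-2)$ in the expression for $L^{\mathrm{M}}(n,m,s)$ strictly drops, the seeker's payoff strictly exceeds $Q(n,m,s)$, and the hider's payoff is strictly below $-\bar{Q}(n,s)$.

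\textbf{Bullet 3 (the leafless $2$-connected case).} When $T(n,s) > \beta$ the same monotonicity (now strictly increasing in $m$) forces $m = 0$. A cut-vertex/disconnectedness argument parallel to Bullet 2 then forces the non-singleton component to be $2$-connected, and hence leafless. On such a component the seeker's $\bm{\sigma}$ reduces to the uniform distribution over the $n-s$ non-singleton nodes (conditional on not seeking in a singleton), so a node $i$ is captured with conditional probability $(d_{i}+1)/(n-s)$ and the hider's payoff at $i$ is affine in $d_i$ with strictly negative slope $-(1-\lambda_{\mathrm{S}})D(n,s)/(n-s)$. Consequently every node of degree greater than $2$ is strictly dominated for the hider, so his equilibrium support $D^{*}$ lies in the set $D_{2}$ of degree-$2$ nodes. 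For the capture probability on $D^{*}$ to equal the cycle value $3/(n-s)$ demanded by $Q(n,0,s)$, the fractional $D^{*}$-packing number of $G$ must be exactly $(n-s)/3$; summing the packing constraints $\sum_{i \in N[j] \cap D^{*}} x_{i} \leq 1$ over all $j \in V$ yields $3 \sum_{i} x_{i} \leq n-s$, while each single-vertex constraint gives $x_{i} \leq 1$ and hence $\sum x_{i} \leq |D^{*}|$. Combining the two bounds forces $|D^{*}| \geq (n-s)/3$, so $|D_{2}| \geq \lceil (n-s)/3 \rceil$ by integrality.

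\textbf{Main obstacle.} The technically delicate step is upgrading the weak inequalities of Lemmas~\ref{capt1}--\ref{capt2} and of the residual-component-size accounting to \emph{strict} inequalities exactly where a structural property fails. Concretely, one must show that a single cut vertex of the core (respectively of the non-singleton component) already causes a strict drop below $f(n-s-2)$ (respectively $f(n-s-1)$), and must combine both the summed and the pointwise packing bounds to extract the integer inequality $|D_{2}| \geq \lceil (n-s)/3 \rceil$ rather than only its continuous version. The odd-$(n-s)$ subcase of Bullet 2 additionally requires ruling out the borderline configuration $m = (n-s-1)/2$, which is the content of Lemma~\ref{lemma:orphan} in the appendix.
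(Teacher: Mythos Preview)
Your argument follows essentially the same route as the paper: monotonicity of $Q(n,m,s)$ in $m$ pins down $m$, a cut-vertex argument forces $2$-connectedness, and in the case $T(n,s)>\beta$ the hider's equilibrium support must lie in degree-$2$ nodes. Two points deserve attention.

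First, in the odd-$(n-s)$ subcase of Bullet~2 you assert but do not argue why one of the three orphaned core nodes must be adjacent to exactly the other two; ruling out $m=(n-s-1)/2$ via Lemma~\ref{lemma:orphan} is not enough for this. The paper handles it by a separate deviation: if every orphaned node had at least one non-orphaned core neighbour, then the seeker, by mixing uniformly on non-orphaned core nodes only, would capture the hider (who must hide either in a leaf or in an orphaned node) with strictly higher probability than in the maximal configuration while inflicting the same damage, contradicting optimality.

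Second, your phrase ``strictly dominated'' for degree-$3^{+}$ nodes is imprecise: such nodes are not dominated in the game-theoretic sense (they can be best responses to some non-uniform seeker strategies), but rather they yield payoff strictly below $-\bar Q(n,s)$ against the specific uniform $\bm\sigma$, which is what excludes them from the hider's equilibrium support. Your fractional-packing derivation of the $\lceil(n-s)/3\rceil$ bound is a clean variant of the paper's argument, which instead lets the seeker mix uniformly on $N_F[T]$ for the hider's support $T\subseteq D_2$ and uses $|N_F[T]|\le 3|T|$ directly to get $|T|\ge (n-s)/3$.
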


\begin{proof}
The fact that $G$ must have exactly $s \in S^{*}(n)$ singleton leaves is already established in proof of Theorem~\ref{th:singlemain}.
For the properties of the remaining part of equilibrium network, we consider the cases of $T(n,s) < \beta$ and $T(n,s) > \beta$ separately.

Suppose that $T(n,s) < \beta$. Suppose first that $n - s$ is even. 
Since $Q(n,m,s)$ is decreasing in $m$ and the maximum feasible value for $m$, when $n-s$ is even, is $(n-s)/2$ so the subnetwork over $n-s$ non-singleton nodes in any optimal network must have $(n-s)/2$ singleton leaves. If the network is optimal, the neighbours of the singleton leaves must form a $2$-connected network. Otherwise, the seeker would obtain a payoff that is strictly higher than $\bar{Q}(n,s)$ by mixing uniformly on the neighbours of non-singleton leaves  when seeking outside singleton nodes. This is because in the case of not capturing the hider, he will leave the subnetwork over $n-s$ nodes disconnected with probability greater than $0$. Hence the optimal subnetwork over $n-s$ non-singleton nodes  must be a maximal core-periphery network.
Second, suppose that $n - s$ is odd. As we showed in proof of Lemma~\ref{odd}, the optimal number of singleton leaves in the subnetwork over $n-s$ non-singleton nodes is $(n-s-3)/2$. Moreover, as we argued above, nodes which are not singleton leaves must form a $2$-connected network. Thus this subnetwork must be a core-periphery network with $2$-connected core and three orphaned nodes. What remains to be shown is that one of the orphaned nodes must have exactly the other two orphaned nodes as its neighbours in this subnetwork. Since the subnetwork formed by the nodes of the core must be $2$-connected, any node of the core must have at least two neighbours. Suppose, to the contrary, that each of the orphaned nodes has at least one neighbour that is not an orphaned node. Then, mixing uniformly on non-orphaned core nodes, the seeker captures the hider with higher probability than in a maximal core-periphery network (regardless of the strategy of the hider) and causes the same damage in the case of not capturing the hider. This results in strictly lower payoff to the hider than $-\bar{Q}(n,s)$ and so the network is not optimal. Therefore the neighbours of one of the orphaned nodes must be exactly the two other orphaned nodes.

Suppose next that $T(n,s) > \beta$. In this case, $Q(n,m,s)$ is increasing in $m$ and so the  optimal network has no singleton leaves in the subnetwork over the $n-s$ non-singleton nodes.
Let $U$ be the set of $n-s$ non-singleton nodes in the network and let $F$ be the subnetwork over this set of nodes.
As  argued above, the seeker has a seeking strategy that guarantees him a probability of capture at least $3/(n-s)$ in $F$. If $F$ is not $2$-connected,   the seeker will leave the subnetwork disconnected in the event of not capturing $H$.  This  gives strictly lower payoff to $H$ than in the cycle.  Hence $F$ must be $2$-connected. Hence  all the nodes in $F$ have degree at least $2$. 
Suppose that $F$ has $t < \lceil (n-s)/3 \rceil$ nodes of degree 2.  Note that since $F$ is 2-connected, only one node is removed if $H$ is not captured. So, the expected payoff of $H$ (and hence $S$) only depends on the probability of capture. 
Consider any strategy $\bm{\eta}$ of $H$ and let $T$ be its support on $U$.  Let $\bm{\sigma}'_T$ be a mixed strategy of the seeker that mixes uniformly on $N_F[T]$.\footnote{
Given graph $G = \langle V, E \rangle$ and a set of nodes $U\subseteq V$, set $N_G[U] = U \cup \{v \in V : uv \in E \textrm{ for some } u \in U\}$ is the \emph{closed neighbourhood} of $U$ in $G$.
}   Let $\bm{\sigma}_T = \lambda \bm{\sigma}'_T + (1-\lambda) \bm{\sigma}^{\mathrm{S}}$ be a strategy of the seeker that mixes uniformly on the singleton nodes with probability $1-\lambda$ and uses $\bm{\sigma}'_T$ with probability $\lambda$, where $\lambda$ is such that the lower bound on the expected payoff to the seeker when the hider hides in $T$ is equal to the lower bound on the expected payoff to the seeker when the hider hides in singleton nodes.
Notice that the lower bound on the expected payoff to the seeker from using $\bm{\sigma}'_T$ when the hider hides in $T$ is strictly higher than $3/(n-s)$. For if $T$ contains a node of degree at least $3$ then the seeker captures the hider with probability strictly greater than $3/|N_F[T]| \geq 3/(n-s)$, and if $T$ does not contain a node of degree $3$ then $|N_F[T]| \leq 3|T| < n-s$ and the seeker captures the hider with probability $3/|N_F[T]| > 3/(n-s)$. Hence there exists $p_T > \bar{Q}(n,s)$ such that the expected payoff to the seeker from using $\bm{\sigma}_T$ against any strategy of the hider, $\bm{\eta}$, with support $T$ on $U$ is at least $p_T$. Taking $\varepsilon = \min_{T\subseteq U} (\bar{Q}(n,s) - p_T)$ shows that $F$ cannot be optimal forv $H$. 
Notice also that if the support of $H$'s strategy in a network with $2$-connected component $F$ contains nodes of degree greater then $3$ then strategy $\bm{\sigma}$ guarantees the seeker payoff strictly greater than $\bar{Q}(n,s)$. Therefore, in equilibrium, the hider never hides in nodes of degree greater than $2$ in the $2$-connected component of an optimal network.
\end{proof}

We next provide examples of topologies of the connected component other than the cycle in equilibrium networks for the case of $T(n,s) > \beta$. Suppose that $n-s = 3t$ where $t \geq 2$ is an integer. Let $U$ be the set of nodes of the component. Suppose that the nodes in $U$ are connected, forming a cycle, and let $T\subseteq U$, $|T| = t$,  be a subset of the nodes such that any two nodes in $T$ are separated by two nodes from $U\setminus T$. Any network obtained from from the cycle by adding links between the nodes in $U\setminus T$ is optimal (an example is presented in Figure~\ref{fig:noncycle}). Both players mixing uniformly on $U$ is an equilibrium on any such network.

\begin{center}
\begin{figure}[!htb]
\centering
\includegraphics[scale=.75]{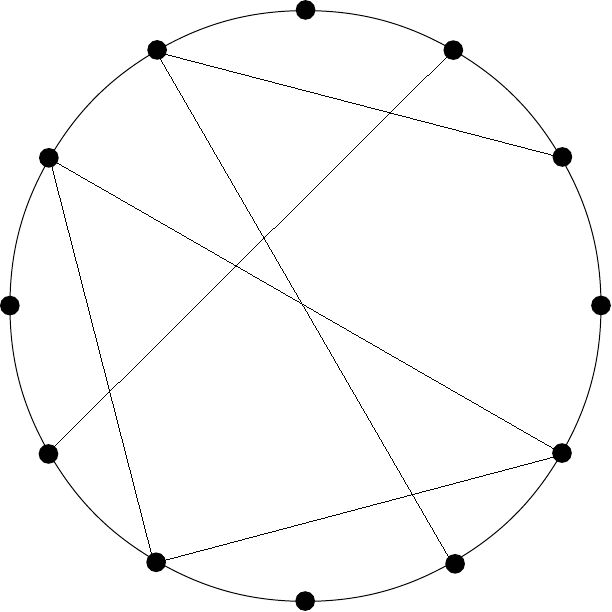}
\caption{An optimal component for $n-s = 12$.}
\label{fig:noncycle}
\end{figure}
\end{center}

Theorems~\ref{th:singlemain} and~\ref{th:singlecomplete} provide  a characterization of optimal networks for the hider in terms of the quantity $T(n,s)$. As this expression is not transparent, we provide sufficient conditions on the utility function $f(\cdot)$ which guarantee that the connected component of an optimal network is a maximal core-periphery network.

 \begin{theorem}
 \label{functf}
 Suppose that either 
 \begin{itemize}
 \item[(i)] $f$ is concave, or
 \item[(ii)] $f$ is convex and for all $x \ge 2$
 \[f(x+1) < \frac{x}{x-1} f(x)\]
 \end{itemize}
  Then, for all $n \geq 1$, and any $\beta \geq 0$, $G$ is an equilibrium network if and only if $G$ has $s \in S^{*}(n)$  singleton nodes and a maximal core-periphery component over $n-s$ nodes.
In addition, if $f$ is linear then $S^*(n) = \{0,1,n\}$.
\end{theorem}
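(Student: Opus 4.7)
The plan is to reduce the equivalence in the statement to the single inequality $T(n,s) < \beta$ and then verify it under each hypothesis on $f$. Observe that Theorem~\ref{th:singlemain} produces an equilibrium with a maximal core-periphery component whenever $T(n,s) < \beta$, while Theorem~\ref{th:singlecomplete} shows this topology is forced in the same regime; when $T(n,s) > \beta$ Theorem~\ref{th:singlecomplete} instead produces equilibria whose non-singleton component is $2$-connected with no singleton leaves, which are not core-periphery. Hence the ``if and only if'' will follow once I establish the strict inequality $T(n,s) < \beta$ for every admissible $s$ (i.e., every $s$ with $n-s \ge 4$), under either hypothesis (i) or (ii).

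First I would dispose of hypothesis (ii). Setting $k = n-s \ge 4$, rewrite $T(n,s) = (k-3)f(k-1) - (k-2)f(k-2)$; since $f(k-2) > 0$ (strict monotonicity from $f(0)=0$) and $k-3 \ge 1$, the inequality $T(n,s) < 0$ is equivalent to $f(k-1)/f(k-2) < (k-2)/(k-3)$. Substituting $x := k-2 \ge 2$ transforms this into $f(x+1) < \frac{x}{x-1} f(x)$ for all $x \ge 2$, which is hypothesis (ii) verbatim. Since $\beta \ge 0$, this yields $T(n,s) < 0 \le \beta$.

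Next I would show that hypothesis (i) implies (ii). Concavity of $f$ together with $f(0) = 0$ yields, for $0 < a < b$, the estimate $f(a) = f\bigl(\tfrac{a}{b} b + (1-\tfrac{a}{b})\cdot 0\bigr) \ge \tfrac{a}{b} f(b)$; in other words $x \mapsto f(x)/x$ is nonincreasing on $(0,\infty)$. Applied at $(x, x+1)$ this gives $f(x+1) \le \tfrac{x+1}{x} f(x)$. For $x \ge 2$ the identity $(x+1)(x-1) = x^2 - 1 < x^2$ gives $\tfrac{x+1}{x} < \tfrac{x}{x-1}$, and because $f$ is strictly increasing from $0$ we have $f(x) > 0$, upgrading the previous estimate to the strict bound $f(x+1) < \tfrac{x}{x-1} f(x)$, which is hypothesis (ii).

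For the final assertion about linear $f(x) = cx$, direct substitution yields
\begin{equation*}
T(n,s) = c\bigl[(k-3)(k-1) - (k-2)^2\bigr] = -c < 0 \le \beta
\end{equation*}
(with $k = n-s$), so the maximal core-periphery conclusion applies at every admissible $s$. To identify $S^{*}(n)$ I would insert the closed forms $A(n, (n-s)/2, s) = (2\beta - (n-s-2)^2)/(n-s)$ for $n-s$ even (together with the analogous but bulkier expression for $n-s$ odd derived from the $R(G) \ne \varnothing$ branch), $B(s) = (\beta - s + 1)/s$, and $f(1) = 1$, $f(n-s) = n-s$, into the defining formula for $\bar Q(n,s)$, and then show by explicit case analysis that the minimum of $s \mapsto \bar Q(n,s)$ always lies in $\{0, 1, n\}$, with each of $0$, $1$, and $n$ arising as the unique minimiser on a nonempty range of $\beta$ (the relevance of $s = 1$ comes from the degenerate simplification $B(1) = \beta$, which is special to this boundary value). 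The principal obstacle is purely algebraic: one must handle the parity cases separately, treat the piecewise switch at $A = -f(1)$, and verify that for each $s \in \{2, \ldots, n-4\}$ the resulting rational function of $\beta$ strictly exceeds $\min\{\bar Q(n,0), \bar Q(n,1), \bar Q(n,n)\}$; the cleanest route seems to be to clear denominators and factor the resulting polynomials in $\beta$, using the monotonicity of $A(n, m^{*}, s)$ in $s$ on each parity class to reduce the number of comparisons.
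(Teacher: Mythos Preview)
Your reduction to $T(n,s)<\beta$ and your direct verification of the inequality under (ii) are correct and match the paper in spirit. The treatment of (i), however, is genuinely different: rather than computing $T(n,s+1)-T(n,s)=-(n-s-3)\Delta^2 f(n-s-3)$ and using concavity to force $T$ increasing in $s$ up to $T(n,n-4)=f(3)-2f(2)<0$ (the paper's route), you observe that concavity with $f(0)=0$ yields $f(x+1)\le\frac{x+1}{x}f(x)<\frac{x}{x-1}f(x)$, i.e.\ (i) already implies the inequality in (ii). This unification is cleaner than the paper's separate second-difference arguments; one small wording issue is that you say ``(i) implies (ii)'' whereas (ii) also asserts convexity---what you actually show (and all you need) is that (i) implies the displayed inequality.

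For the linear case your plan of inserting closed forms and clearing denominators is in principle sound but would be laborious, and the paper takes a structurally different route: it shows (Lemma~\ref{lemma:minpayoffs}) that the auxiliary function $A\!B(n,s)$, obtained by equalising the seeker's guarantees on and off the singleton set, is either monotone decreasing or unimodal (first increasing, then decreasing) in $s$ on $[0,n]$. Combined with $A\!B(n,n)=B(n)$ and the piecewise definition of $\widetilde Q$, this immediately localises the minimum to the endpoints $\{0,1\}$ or $\{n\}$ without any case-by-case comparison of intermediate $s$. The key step is to study the sign of $\partial A\!B/\partial s$, which reduces to a quadratic $W(s)$ whose sign pattern on $[0,n]$ can be controlled by checking $W(n)<0$ and the location of its vertex. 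This is a good deal less algebra than the factor-and-compare strategy you outline, and it also handles the piecewise switch at $A=-f(1)$ uniformly.
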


\begin{proof}
Notice that 
\begin{align*}
T(n,s) = (n-s-3)\Delta f(n-s-2) - f(n-s-2)
\end{align*}
and 
\begin{align*}
T(n,s+1) = (n-s-3)\Delta f(n-s-3) - f(n-s-2)
\end{align*}
Hence, 
\begin{align*}
T(n,s+1) - T(n,s) & = -(n-s-3)(\Delta f(n-s-2) - \Delta f(n-s-3)) \\
                  & = -(n-s-3)\Delta^2 f(n-s-3). 
\end{align*}
where $\Delta f(x) = f(x+1) - f(x)$ is the first-order (forward) difference of $f$ at $x$ and $\Delta^2 f(x) = \Delta f(x+1) - \Delta f(x)$ is the  second-order (forward) difference of $f$ at $x$.
Hence,  if $f$ is concave, then $\Delta^2 f(n-s-3) \leq 0$, and  so
\[ T(n,s+1) - T(n,s) \geq 0 \mbox{ for all  }s \leq n-4\]
 In addition $T(n,n-4) = f(3) - 2f(2)$ which is negative if $f$ is concave and strictly increasing. 
 Thus for all $n \geq 4$ and $s \leq n-4$, $T(n,s) < 0 \leq \beta$.
 
From Theorems~\ref{th:singlemain} and~\ref{th:singlecomplete}, $G$ is an equilibrium network if and only if its connected component is a maximal core-periphery network over $n-s$ nodes.
  
If $f$ is convex then $\Delta^2 f(n-s-3) \geq 0$ and $T(n,s+1) - T(n,s) \leq 0$, for all $s \leq n-4$. Thus $T(n,s)$ is decreasing in $s$ on $[0,n-4]$, for all $n\geq 4$.

 Suppose that $f(x+1) < x/(x-1) f(x)$ for all $x \geq 2$.\footnote{
An example of a family of strictly increasing convex functions that satisfy this property are the functions $f(x) = x^\gamma/(x+1)^{\gamma - 1}$ with $\gamma > 1$.
}
 Then $T(n,0) = (n-3)f(n-1) - (n-2)f(n-2) < 0$ and so
  \[T(n,s) \leq T(n,0) < \beta, \mbox{ for all } s \in [0,n-4]. \]
  
Again, by Theorems~\ref{th:singlemain} and~\ref{th:singlecomplete}, $G$ is an equilibrium network if and only if its connected component is a maximal core-periphery network over $n-s$ nodes.

Next, note that if $n \le 5$, then Lemma~\ref{sing} shows that $ s^*\le1 $. 
Suppose that $f$ is linear and that $n \geq 6$. We show in the Appendix (Lemma~\ref{lemma:minpayoffs}) that if $n\geq 6$, then $Q(n,(n-s)/2,s)$ is minimised either at $s = 0$ or at $s = 1$ or at $s = n$. 
This shows that $s^* \in \{0, 1, n\}$ and completes the proof of the theorem.
\end{proof}

\begin{remark} 
The theorem establishes a full characterization of equilibrium networks when $f$ is concave or convex but growing slowly.
\end{remark}

\section{Conclusions}
We proposed and studied a strategic model network design and hiding in the network facing a hostile authority that attempts to disrupt the network and capture the hider. We characterized optimal networks for the hider as well as optimal hiding and seeking strategies in these networks. Our results suggests that the hider chooses networks that allow him to be anonymous and peripheral in the network. We also developed a technique for solving such models in the setup of zero-sum games.

There are at least two avenues for future research. Firstly, different forms of benefits from the network could be considered. For example,  the utility of the hider could dependent not only on the size of his component but also on his distance to the nodes in the component. Given our results, we conjecture that this would make the core periphery components with better connected core more attractive. But answering this problem precisely requires formal analysis.
Secondly, the seeker could be endowed with more than one seeking unit and the units could be used either simultaneously or sequentially. Our initial investigation suggests that solving such an extension might be an ambitious task. 

\newpage
\bibliographystyle{plainnat}
\bibliography{biblio}
\newpage
\appendix
\section*{Appendix}

\section{Proofs}

\begin{lemma}
\label{lemma:orphan}
If $n-s$ is odd and $T(n,s) < \beta$, the hider obtains a higher expected payoff in a core-periphery network with $(n-s-3)/2$ singleton leaves than in a core-periphery network with $(n-s-1)/2$ singleton leaves.
\end{lemma}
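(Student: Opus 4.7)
The plan is to exhibit, on any core-periphery network $G_1$ with $s$ singleton nodes and $(n-s-1)/2$ singleton leaves (hence a single orphaned core node $c^*$), a seeker strategy whose guaranteed payoff strictly exceeds $Q(n,(n-s-3)/2,s)$. Since Lemma~\ref{odd} establishes that the hider obtains exactly $-Q(n,(n-s-3)/2,s)$ on the maximal core-periphery network with $(n-s-3)/2$ leaves, this comparison suffices.

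Write $m=(n-s-1)/2$ and let $d\ge 1$ denote the core-degree of $c^{*}$ (whose neighbours are necessarily attached core nodes). Consider the seeker strategy that mixes uniformly on the $m$ attached core nodes when seeking in the non-singleton component, combined with the usual singleton mixture used to derive $Q(n,m,s)$. A case analysis on the hider's location in the non-singleton component shows that the conditional payoff to the seeker is at least
\[ A_1 \;:=\; \frac{\beta}{m} \;-\; \frac{m-1}{m}\,f(n-s-2). \]
For a hider on periphery leaf $p_i$, capture occurs exactly when the seeker picks $c_i$ (probability $1/m$); otherwise the attacked $c_j$ and the isolated $p_j$ are lost, so $p_i$ sits in a component of size $n-s-2$ (assuming WLOG the core is $2$-connected, else the seeker does even better). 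For a hider on $c^{*}$, capture probability becomes $d/m\ge 1/m$ while the not-caught component size is again $n-s-2$, yielding a nonnegative excess $(d-1)(\beta+f(n-s-2))/m$ over $A_1$. A hider on any attached core node is even more exposed. Plugging $A_1$ into the same outer formula used to derive $Q(n,m,s)$ shows that the overall seeker payoff on $G_1$ is at least $Q_1$, where $Q_1$ is obtained from the $Q(n,m,s)$ formula with $A_1$ in place of $A(n,m,s)$.

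The crux is the inequality $A_1>A_2$, where $A_2:=A(n,(n-s-3)/2,s)=\beta-\mu(\beta+f(n-s-1))$ with $\mu=(n-s-3)(f(n-s-2)+\beta)/[(n-s-3)f(n-s-1)+2f(n-s-2)+(n-s-1)\beta]$ from the proof of Lemma~\ref{odd}. Putting $A_1-A_2$ over a common denominator and cancelling yields
\[ A_1-A_2 \;=\; \frac{2(n-s-3)\bigl(f(n-s-1)-f(n-s-2)\bigr)\bigl(\beta+f(n-s-2)\bigr)}{(n-s-1)\bigl[(n-s-3)f(n-s-1)+2f(n-s-2)+(n-s-1)\beta\bigr]}, \]
which is strictly positive for $n-s\ge 5$ since $f$ is strictly increasing and $\beta\ge 0$. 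Finally, the function $Q(A)=(AB(s)-f(1)f(n-s))/(A+B(s)+f(1)+f(n-s))$ is strictly increasing in $A$: its derivative simplifies to $(B(s)+f(1))(B(s)+f(n-s))/(A+B(s)+f(1)+f(n-s))^2>0$, using $B(s)+f(1)=(\beta+f(1))/s>0$ and $B(s)+f(n-s)>0$. Therefore $Q_1>Q(n,(n-s-3)/2,s)$, and the hider's payoff on $G_1$ is strictly smaller than on the maximal core-periphery, as required.

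The main technical obstacle is the algebraic simplification of $A_1-A_2$ to the transparently positive form above; the cancellations between the linear-in-$\beta$ terms of $A_1$ and the rational expression for $A_2$ only collapse nicely once the common denominator is expanded and the coefficients of $a=f(n-s-1)$, $b=f(n-s-2)$, and $\beta$ are grouped carefully. Once that is done, monotonicity of $Q$ in $A$ delivers the lemma immediately.
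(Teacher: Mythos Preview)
Your proof is correct and follows essentially the same route as the paper: the seeker strategy mixing uniformly on the $m=(n-s-1)/2$ attached core nodes is precisely the paper's $\bm{\sigma}^{\mathrm{M}}$, your $A_1$ coincides with the paper's $X(n,s)$, the displayed formula for $A_1-A_2$ is exactly the paper's expression for $X(n,s)-A(n,(n-s-3)/2,s)$, and your monotonicity-of-$Q$-in-$A$ step is the content of the paper's Claim~\ref{claim:decr}. The only minor omission is that your monotonicity argument treats only the branch $A>-f(1)$ of the $Q$-formula, whereas the paper's Claim~\ref{claim:decr} also covers the piecewise case $A\le -f(1)$; this is easily patched and does not affect the substance.
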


\begin{proof}
In a core-periphery network with $(n-s-1)/2$ singleton leaves,  the set $R(G)$ consist of exactly one node and this node is connected to at least two nodes in $M(G)$. It cannot be connected to one node in $M(G)$, because in this case its neighbour would have two leaf-neighbours and could not be a member of $M(G)$. 

Let $\bm{\tilde{\sigma}} = \lambda \bm{\sigma}^{\mathrm{S}} + (1-\lambda) \bm{\sigma}^{\mathrm{M}}$, where $\bm{\sigma}^{\mathrm{M}}$ and $\bm{\sigma}^{\mathrm{S}}$ are the mixed strategies of the seeker, defined earlier in the proof,
\begin{equation*}
\lambda = \begin{cases}
          \frac{X(n,s) + f(1)}{B(s) + X(n,s) + f(1) + f(n-s)}, & \textrm {if $X(n ,s) > -f(1)$,} \\
          0, & \textrm{otherwise,}
          \end{cases}
\end{equation*}
and
\begin{equation*}
X(n,s) = \frac{2\beta}{n-s-1} - \left(1 - \frac{2}{n-s-1}\right)f(n-s-2).
\end{equation*}
Using this strategy, with probability $\lambda$, $S$ mixes uniformly on the nodes in $M(G)$ and with probability $(1-\lambda)$, $S$ mixes uniformly on the singleton nodes of $G$. The payoff to $S$ conditional on $H$ hiding in a singleton node is at least $\lambda B(s) - (1-\lambda)f(1)$ and the payoff to $S$ conditional on $H$ hiding outside singleton nodes is at least $(1-\lambda)X(n,s) - \lambda f(n-s)$.
It is easy to verify that the value of $\lambda$ is such that both these payoffs are equal (in the case of $X(n,s) > -f(1)$) or the latter is higher, for any value of $\lambda$. Therefore the payoff to $S$ from using $\bm{\tilde{\sigma}}$ against any strategy of $H$ is at least
\begin{equation*}
Y(n,s) = \begin{cases}
\frac{B(s) X(n,s) - f(1) f(n-s)}{B(s) + X(n,s) + f(1) + f(n-s)}, & \textrm{if $X(n,s) > -f(1)$,} \\
X(n,s), & \textrm{otherwise,}
\end{cases}
\end{equation*}
and so the upper bound on the payoff to the hider on any network with $s$ singleton nodes and $(n-s-1)/2$ singleton leaves is at most $-Y(n,s)$. To see that $-Q(n,(n-s-3)/s,s) > -Y(n,s)$ notice that 
\begin{multline*}
X(n,s) - A(n,(n-s-3)/2,s) = \\
 \frac{2(f(n-s-1) - f(n-s-2))(f(n-s-2) + \beta)(n - s - 3)}{(n - s - 1)(f(n-s-1)(n - s - 3) + 2f(n-s-2) + \beta(n - s - 1))} > 0
\end{multline*}
and so $X(n,s) > A(n,(n-s-3)/2,s)$. 

Next consider the following Claim:

\begin{claim}
\label{claim:decr}
The function
\begin{equation*}
\varphi(Z) = \begin{cases}
        \frac{B(s)Z - f(1)f(n-s)}{Z + B(s) + f(n-s) + f(1)}, & \textrm{if $Z > -f(1)$,} \\
        Z, & \textrm{otherwise,}
        \end{cases}
\end{equation*}
is strictly increasing in $Z$.
\end{claim}

\begin{proof}
Notice that $\varphi(-f(1)) = -f(1)$ when $Z = -f(1)$. Moreover, $\varphi$ is increasing in $Z$ if $Z < -f(1)$.
Let $Z > -f(1)$. Taking the derivative of $\varphi$ with respect to $Z$ we get
\begin{equation*}
\varphi'(Z) = \frac{(B(s)+f(1))(B(s)  + f(n-s))}{(Z + B(s) + f(n-s) + f(1))^2}
\end{equation*}
and it is immediate to see that $\varphi'(Z) > 0$ and $\varphi$ increases in $Z$ when $B(s) > -f(1)$ and $B(s) \geq -f(n-s)$.
Notice that $B(s) = (\beta+f(1))/s -f(1) > -f(1)$ for any $\beta \geq 0$ and $s > 0$. Also $f(n-s) \geq f(1)$ for all $s \in [0,n-1]$.
Thus, by the observation on function $\varphi$, above, $\varphi(Z)$ increases when $Z$ increases. 
\end{proof}

Claim \ref{claim:decr}, together with $X(n,s) > A(n,(n-s-3)/2,s)$, implies that $Y(n,s) > Q(n,(n-s-3)/2,s)$, completing the proof of the Lemma.  
\end{proof}

\begin{lemma}
\label{lemma:minpayoffs}
Let $\lambda > 0$ and let $f(x) = \lambda x$, for all $x\in \mathbb{R}_{\geq 0}$.
For any natural $n \geq 6$, $t \in \{0,1\}$ and any $s \in \{t+1,\ldots,n\}$, $Q(n,(n-s)/2,s) > \min(Q(n,0,n),Q(n,(n-t)/2,t))$
\end{lemma}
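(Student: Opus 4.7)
Substitute $f(x)=\lambda x$ and set $k = n-s$. Then $A(n,(n-s)/2,s) = (2\beta - \lambda(k-2)^2)/k$, $B(s) = (\beta - (s-1)\lambda)/s$, $f(1) = \lambda$, and $f(n-s) = \lambda k$. In the regime where $A > -\lambda$ (``case~$2$'' of~\eqref{defq}), I would substitute these into the formula for $Q(n,(n-s)/2,s)$ and clear the common factor $k(n-k)$ from numerator and denominator to obtain
\[
g(s) \;:=\; Q(n,(n-s)/2,s) \;=\; \frac{N(k)}{D(k)},
\]
where $N(k)$ and $D(k)$ are polynomials of degree at most $2$ in $k$ whose coefficients are polynomial in $n,\beta,\lambda$. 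Direct evaluation shows $N(0)/D(0) = B(n) = Q(n,0,n)$ and $N(n)/D(n) = A(n,n/2,0) = Q(n,n/2,0)$, so this rational form represents $g$ uniformly on $\{0,1,\ldots,n\}$. Positivity of $D(k)$ on $\{1,\ldots,n-1\}$ follows from the identity $D(k) = k(n-k)\bigl[A + B + f(1) + f(n-s)\bigr]$ together with positivity of both factors in case~$2$.

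To compare $g(s)$ with $g(s^*)$ for any reference $s^* = n-k^*$, I would use
\[
g(s) - g(s^*) \;=\; \frac{(k-k^*)\,L_{k^*}(k)}{D(k)\,D(k^*)},
\]
where $L_{k^*}(k)$ is the linear factor obtained by dividing the degree-$2$ polynomial $N(k)D(k^*) - N(k^*)D(k)$ by $(k-k^*)$. The case $k^* = 0$ (comparison with $g(n)$) is especially clean: because $N(0) = 2(\beta-(n-1)\lambda)(\beta-2\lambda)$ and $D(0) = 2n(\beta-2\lambda)$ share the factor $(\beta-2\lambda)$, it cancels to give
\[
g(s) - g(n) \;=\; \frac{k\,\phi(k)}{n\,D(k)}, \qquad \phi(k) = -(n-4)\lambda(\beta+\lambda)\,k + \beta^2 + (n-4)\beta\lambda + (n-5)\lambda^2.
\]
For $n \ge 6$ and $\lambda > 0$, one checks $\phi(0) \ge (\beta+\lambda)^2 > 0$ while the slope $-(n-4)\lambda(\beta+\lambda)$ is strictly negative, so $\phi$ has a unique positive zero $k_\phi$ which determines the set of $k$ on which $g(s) > g(n)$. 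An analogous computation with $k^* = n-t$ for $t \in \{0,1\}$ produces a linear factor $L_{n-t}(k)$, and the claim reduces to showing that for every $s \in \{t+1,\ldots,n\}$ either $k < k_\phi$ (so $g(s) > g(n)$) or $L_{n-t}(k)$ has the sign that forces $g(s) > g(t)$. If instead $A \le -\lambda$, i.e., $2\beta \le \lambda(n-s-1)(n-s-4)$, then $Q = A$ is a simple monotone function of $n-s$ on the relevant range, and the inequality follows by direct comparison with $B(n)$ and $A(n,n/2,0)$.

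The main obstacle is this sign-interlock step: verifying that the cutoffs from $k^* = 0$ and $k^* = n-t$ jointly cover the whole range $s \in \{t+1,\ldots,n\}$, for every regime of $\beta/\lambda$, every $n \ge 6$, and both $t \in \{0,1\}$. The underlying polynomial manipulations are elementary but voluminous; additional care is needed at the boundary values $s \in \{n-3,n-2,n-1\}$, where the rational formula is valid only as a formal extension (no feasible network realizes $m = (n-s)/2$ for odd $n-s$), and at $s = n$, where $g(s) = g(n)$ and the strict inequality becomes equality.
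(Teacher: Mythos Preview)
Your setup is correct: after substituting $f(x)=\lambda x$ and clearing denominators, $Q(n,(n-s)/2,s)$ is indeed a ratio of two quadratics in $k=n-s$, and your computation of $\phi$ for the comparison with $Q(n,0,n)$ is right (including $\phi(0)\ge(\beta+\lambda)^2$). However, what you have written is a plan, not a proof. The decisive step---your ``sign-interlock,'' i.e.\ showing that for every admissible $s$ at least one of the two linear comparison factors has the correct sign---is precisely the content of the lemma, and you have only described it, not carried it out. Without it the argument is empty: you have rewritten $g(s)-g(n)$ and $g(s)-g(t)$ in convenient form but not established either inequality on any specific range.

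The paper avoids this two-cutoff analysis altogether. Rather than comparing $\widetilde Q(n,s):=Q(n,(n-s)/2,s)$ separately to its values at $s=t$ and $s=n$, it shows that the continuous extension $A\!B(n,s)$ (which agrees with $\widetilde Q$ in the relevant regime and matches $\widetilde A$ and $B$ at the transition points) is \emph{unimodal} in $s$ on $[0,n]$: either monotone decreasing, or first increasing then decreasing. This is done by computing $\partial A\!B/\partial s$; its sign is governed by a single quadratic $W(s)$, and one checks $W(n)<0$ together with a short case split on the sign of the leading coefficient to conclude that $W$ is either nonpositive throughout or changes sign from $+$ to $-$ exactly once on $[0,n]$. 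Unimodality immediately yields that the minimum of $\widetilde Q$ over any discrete subrange $\{t,\ldots,n\}$ is attained at an endpoint, which is the lemma. This is strictly more efficient than your route: one derivative and one quadratic instead of two separate endpoint comparisons whose domains of validity must then be shown to cover the range.

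Your remark about $s=n$ (where the strict inequality degenerates to equality when the minimum is at $n$) and about $s\in\{n-3,n-2,n-1\}$ being only formal values is well taken; the paper handles the latter by treating $A\!B$ as a smooth function on the real interval and checking boundary consistency ($A\!B(n,\tilde s)=-\lambda$ at the regime switch, $A\!B(n,n)=B(n)$).
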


\begin{proof}
Let $f(x) = \lambda x$, with $\lambda > 0$, and let $\widetilde{\beta} = \beta / \lambda$. Let
\begin{align*}
\widetilde{A}(n,s) & = A(n,(n-s)/2,s) = \lambda\left(2\left(\frac{\widetilde{\beta}-2}{n-s}\right) + 4 - (n-s)\right), \textrm{ for $0 \leq s \leq n-2$,} \\
B(s) &= \lambda\left(\frac{\widetilde{\beta} + 1}{s} - 1\right),
\end{align*}
and
\begin{equation*}
\widetilde{Q}(n,s) = Q(n,(n-s)/2,s) = \begin{cases}
               \widetilde{A}(n,s), & \textrm{if $\widetilde{A}(n,s) \leq -\lambda$ or $s = 0$,} \\
               A\!B(n,s), & \textrm{if $1 \leq s \leq n-2$ and $\widetilde{A}(n,s) > -\lambda$}\\
               B(n), & \textrm{otherwise,}
               \end{cases}
\end{equation*}
with
\begin{equation}
\label{eq:ab}
A\!B(n,s) = (1-\rho)\widetilde{A}(n,s) - \rho \lambda (n-s)
\end{equation}
where $\rho$ solves
\begin{equation}
\label{eq:varrho}
(1-\rho)\widetilde{A}(n,s) + \rho\lambda (s-n) = \rho B(s) - (1 - \rho)\lambda.
\end{equation}

Solving~\eqref{eq:varrho} we get
\begin{equation*}
\rho = \frac{s(2(\widetilde{\beta} - 2) - (n-s)(n-s-5))}{s(2(\widetilde{\beta} - 2) - (n - s)(n - s - 5)) + (n - s)(s(n - s - 1) + \widetilde{\beta} + 1)}.
\end{equation*}
Notice that $2(\widetilde{\beta} - 2) - (n-s)(n-s-5) > 0$ if and only if $\widetilde{A}(n,s) > -\lambda$, and $(n - s)(s(n - s - 1) + \widetilde{\beta} + 1) > 0$ for $s \leq n-1$.
Thus if $\widetilde{A}(n,s) > -\lambda$ then $\rho \in (0,1)$. In addition $B(s) > -\lambda$, for all $s > 0$, so if $\rho \in (0,1)$ then $A\!B(n,s) > -\lambda$. Moreover, $\widetilde{A}(n,s)$ is increasing in $s$ on $[0,n-2]$ and it is equal to $\beta$ at $s = n-2$. 
By the observations above, if $\widetilde{A}(n,1) \leq -\lambda$ then $\bar{Q}(n,0) = \widetilde{A}(n,0) < \widetilde{A}(n,1) = \widetilde{Q}(n,1) \leq -\lambda < \bar{Q}(n,s)$, for all $s \in \{2,\ldots,n\}$, and the claim of the lemma holds.

For the remaining part of the proof suppose that $\widetilde{A}(n,1) > -\lambda$. This implies $2(\widetilde{\beta}-2) > (n-1)(n-6)$ and, consequently, $\widetilde{\beta} > 2$ if $n \geq 6$. 
We will show that $\widetilde{Q}(n,s)$ is either decreasing or first increasing and then decreasing on $[0,n-1]$. Let $\tilde{s} = \inf \{s \in [0,n-2) : \widetilde{A}(n,s) \geq -\lambda\}$. Since $\widetilde{A}(n,s)$ is increasing in $s$ and equal to $\beta \geq 0$ at $s = n-2$ so the infimum exists and $\tilde{s}$ is well defined. On $[0,\tilde{s}]$, $\widetilde{Q}(n,s) = \widetilde{A}(n,s)$ and, as we argued above, $\widetilde{Q}(n,s)$ is increasing. Consider the interval $[\tilde{s},n-1]$. Notice that since $B(s) > -\lambda \geq -\lambda(n-s)$, for all $0 < s \leq n-1$, and $\widetilde{A}(n,\tilde{s}) = -\lambda$ so $A\!B(n,\tilde{s}) = -\lambda$. In addition, $A\!B(n,n) = B(n)$.
We will show that $A\!B(n,s)$ is either decreasing or first increasing and then decreasing on $[0,n]$. Inserting $\rho$ into~\eqref{eq:ab} we get
\begin{equation*}
A\!B(n,s) = \frac{(n^2(\widetilde{\beta} + 1) - 2n(s(\beta/\lambda - 1) + 2(\widetilde{\beta} + 1)) + s^2(\widetilde{\beta} - 3) + 6s\widetilde{\beta} - 2(\widetilde{\beta} + 1)(\widetilde{\beta} - 2))}{s(4s - \widetilde{\beta} + 5) - n(4s + \widetilde{\beta} + 1)}.
\end{equation*}
Taking the derivative of $A\!B(n,s)$ with respect to $s$ we get
\begin{equation*}
\frac{\partial A\!B(n,s)}{\partial s} = \frac{(\widetilde{\beta} + 1) W(s)}{(s(4s - \widetilde{\beta} + 5) - n(4s + \widetilde{\beta} + 1))^2},
\end{equation*}
where
\begin{equation*}
W(s) = X s^2 - 2Ys + \left(n+\frac{\widetilde{\beta}-2}{2}\right) Y - \left(\frac{\widetilde{\beta}-2}{2}\right)(n-4)(\widetilde{\beta}+1),
\end{equation*}
with $X = 4n - \widetilde{\beta} - 15$ and $Y = 4n^2 + n(\widetilde{\beta} - 19) - 8(\widetilde{\beta} - 2)$.

The sign of $\partial A\!B/\partial s$ is the same as the sign of $W(s)$. Notice that $W(n) = -2(\widetilde{\beta} - 2)(n+\widetilde{\beta}-5)< 0$, as $n\geq 6$ and $\widetilde{\beta} > 2$. When $X > 0$, then $W(s)$ is an $\bigcup$-shaped parabola and, since $W(n) \leq 0$, either $W$ is negative or $W$ is first positive and the negative on $[0,n]$. Thus in this case $A\!B$ is either increasing or first increasing and then decreasing on $[0,n]$.
Similar observation holds when $X = 0$. Suppose that $X < 0$. In this case  $W(s)$ is an $\bigcap$-shaped parabola and it has a maximum at $s^{*} = Y/X$. Suppose that $s^{*} \in (0,n-2)$. Since $X < 0$ so $Y < 0$. Moreover, for $n \geq 6$, $X < 0$ implies $\beta > 5$ and, consequently, 
\begin{align*}
W(s^{*}) & =  - Ys^{*} +  \left(n+\frac{\widetilde{\beta}-2}{2}\right) Y - \left(\frac{\widetilde{\beta}-2}{2}\right)(n-4)(\widetilde{\beta}+1) \\
         & = \left(n - s^{*} +\frac{\widetilde{\beta}-2}{2}\right) Y - \left(\frac{\widetilde{\beta}-2}{2}\right)(n-4)(\widetilde{\beta}+1) < 0.
\end{align*}
Thus $W$ is either negative or first positive then negative on $[0,n]$, for any natural $n\geq 5$. Hence $A\!B{Q}$ is either decreasing
or first increasing and then decreasing on $[0,n]$, for any natural $n\geq 6$.

By the analysis above, when $\widetilde{A}(n,1) > -\lambda$ then $A\!B(n,s)$ is either decreasing or first increasing and then decreasing in $s$ on $[0,n]$ and $A\!B(n,n) = B(n)$. Hence, by the definition of $\widetilde{Q}(n,s)$, the claim of the lemma follows immediately.
\end{proof}

\newpage
















\end{document}